\begin{document}
\title[Optimal transport between DPP and fast simulation]{Optimal transport between determinantal point processes and application
  to fast simulation}

\author{Laurent Decreusefond}
\address{LTCI, Telecom Paris, Institut polytechnique de Paris, France}
\email{laurent.decreusefond@mines-telecom.fr}

\author{Guillaume Moroz}
\address{INRIA Nancy Grand-Est, Nancy, France}
\email{guillaume.moroz@inria.fr}

\thanks{Supported by grant ANR-17-CE40-0017 of the French National Research Agency (ANR project ASPAG)}

\keywords{Determinantal point processes, Optimal transport, Simulation}
\subjclass[2010]{60G55,68W25,68W40}
\begin{abstract}
  We analyze several optimal transportation problems between determinantal point
  processes. We show how to estimate some of the distances between distributions
  of DPP they induce. We then apply these results to evaluate the accuracy of a
  new and fast DPP simulation algorithm. We can now simulate in a reasonable
  amount of time more than ten thousands points.
\end{abstract}
\maketitle{}

\newtheorem{theorem}{Theorem}
\newtheorem{corollary}[theorem]{Corollary}
\newtheorem{lemma}[theorem]{Lemma}
{\theoremstyle{definition}
\newtheorem{definition}{Definition}
\newtheorem{hyp}{Hypothesis}
\newtheorem{notation}{Notation}
}
{\theoremstyle{remark}
  \newtheorem{remark}{Remark}
}
\def\mug{\mu_{\text{Gin}}}
\def\NN{{\mathfrak{N}}}
\def\proba{{\mathfrak M}_1}
\def\C{{\ensuremath{\mathbf C}}}
\def\dif{\text{d}}
\def\moins{\ominus}
\def\car{\mathbf 1}
\def\distrib{\text{law}}
\newcommand{\esp}[1]{\mathbf E\left[{#1}\right] }
\newcommand{\espp}[2]{\mathbf E_{#1}\left[{#2}\right] }
\def\trace{\operatorname{trace}}
\def\vect{\operatorname{span}}
\def\sp{\operatorname{sp}}
\def\proj{\operatorname{proj}}
\def\dist{\operatorname{dist}}
  \def\E{{\mathbf E}}
  \def\R{{\mathbf R}}
  \def\N{{\mathbf N}}
  \def\T{{\mathcal W}}
  \def\P{\mathbf P}
  \def\F{\mathcal F}
  \def\opt{{\text{opt}}}
  \def\c{{\mathfrak c}}
  \def\p{{\mathfrak p}}
  \def\Id{\operatorname{Id}}
  \def\Lip{\operatorname{Lip}}
  \def\MKP{\operatorname{MKP}}
\def\Adj{\operatorname{Adj}}
\def\TV{\text{\small{TV}}}
\def\KR{\text{\small{KR}}}
\def\Gin{\mathfrak G}
\section{Introduction}
\label{sec:intro}

Determinantal point processes (DPP) have been introduced in the seventies
\cite{Macchicoincidenceapproachstochastic1975} to model fermionic particles with
repulsion like electrons. They recently regained interest since they represent
the locations of the eigenvalues of some random matrices. A determinantal point
process is characterized by an integral operator of kernel $K$ and a reference
measure~$m$. The integral operator is compact and symmetric and is thus characterized by
its  eigenfunctions and its eigenvalues. Following
\cite{HoughDeterminantalprocessesindependence2006}, the eigenvalues are not
measurable functions of the realizations of the point process so it is difficult
to devise how a modification of the eigenfunctions, respectively of the
eigenvalues or of the reference measure,   modifies the random configurations of
a DPP. Conversely, it is also puzzling to know how the usual transformations on point processes like
thinning, dilations, displacements, translate onto $K$ and  $m$.

A careful analysis of the simulation algorithm  given in
\cite{HoughDeterminantalprocessesindependence2006} yields several answers to
these questions. For instance, it is clear that the eigenvalues control the
distribution of the number of points and the eigenfunctions determine the
positions of the atoms once their number is known. The above mentioned algorithm
is a beautiful piece of work but requires to draw points according to
distributions whose densities  are not expressed as  combinations of classical
functions, hence the necessity to use rejection sampling method. Unfortunately,
as the number of drawn points increases, the densities quickly present high peaks and
deep valleys inducing a high number of rejections, see Figure~\ref{fig:peaks}.

\begin{figure}[!ht]
  \centering
  \includegraphics[width=0.45\textwidth]{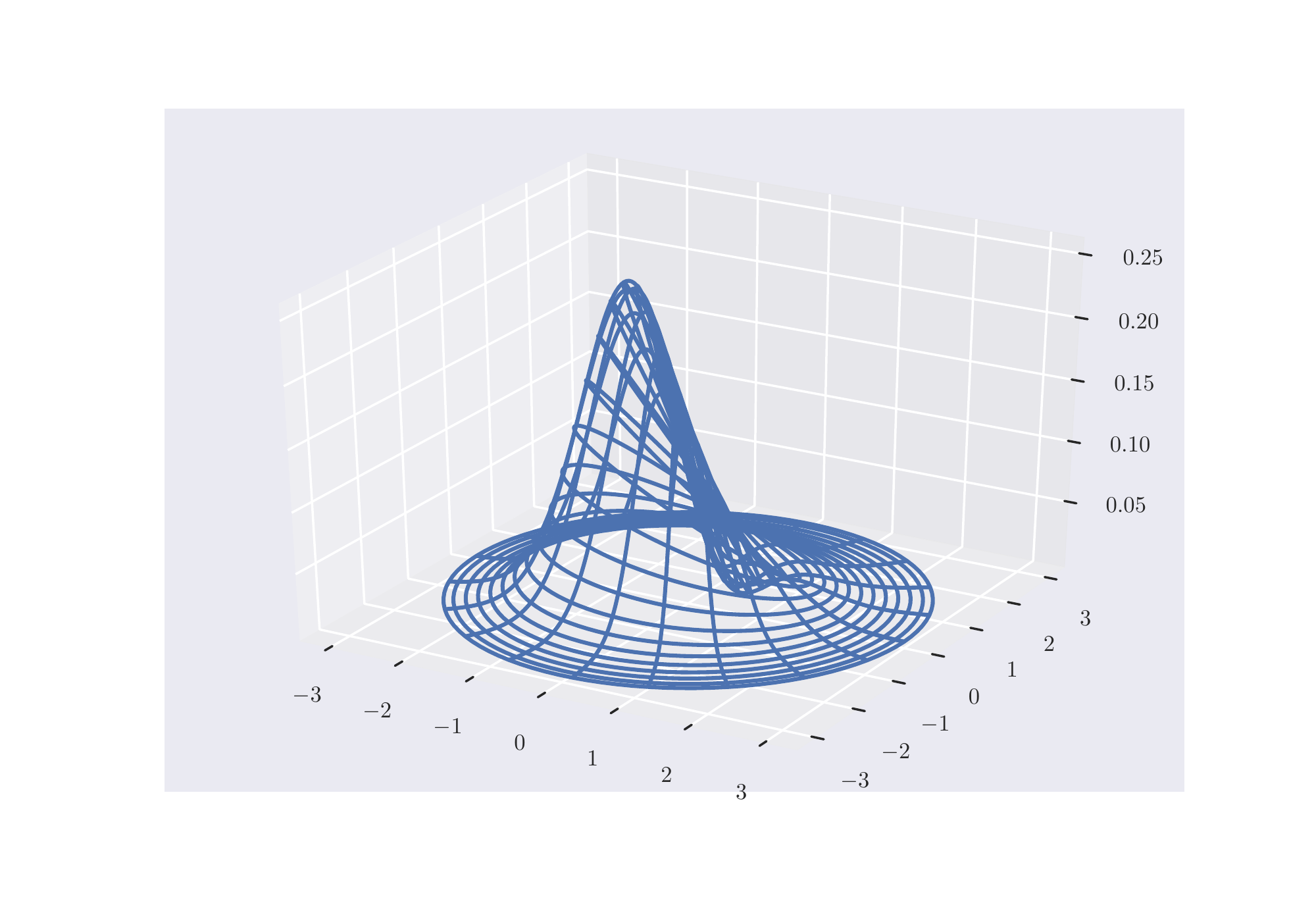}
  \includegraphics[width=0.45\textwidth]{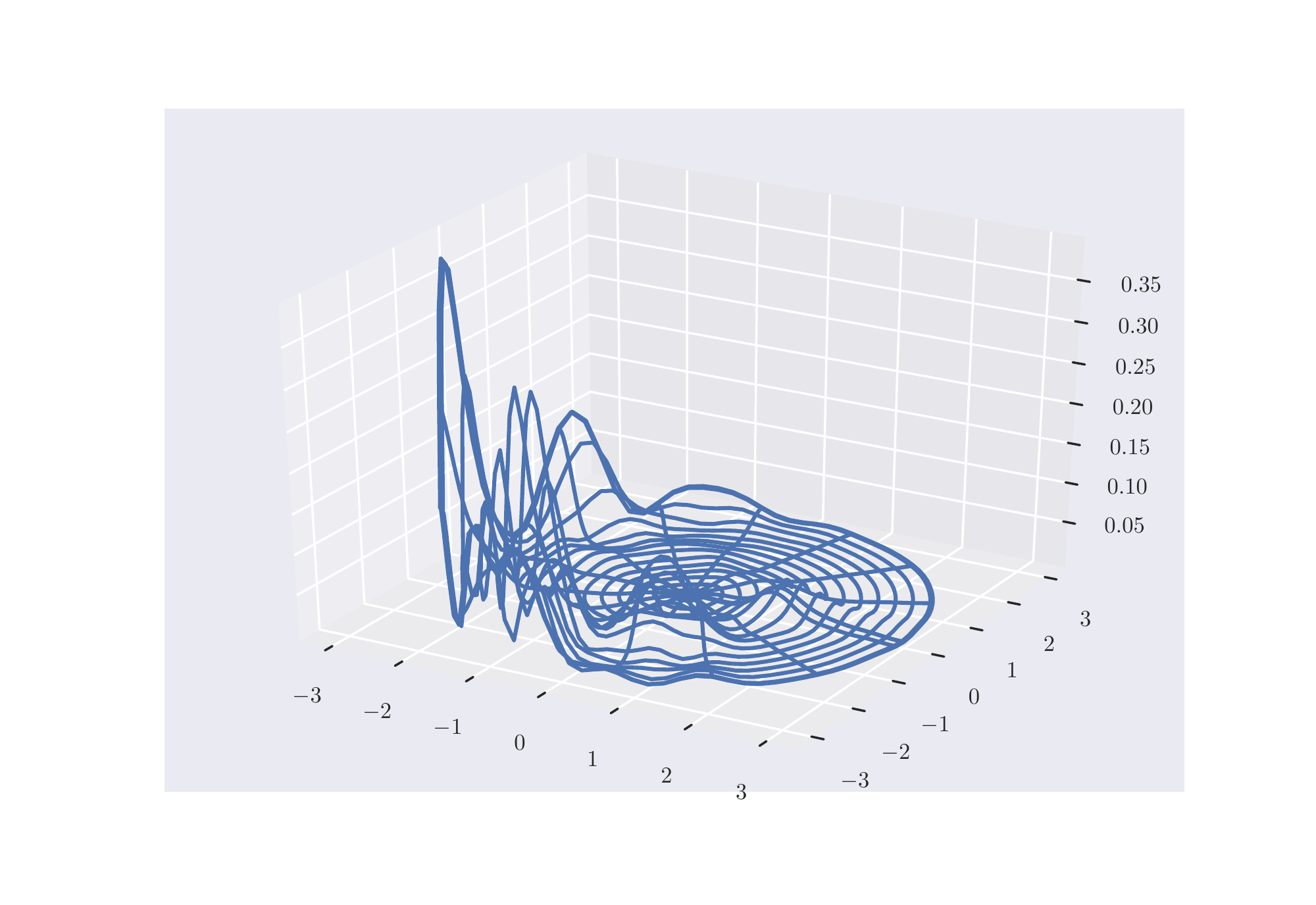}
  \caption{Peaks and valleys of some densities.}
  \label{fig:peaks}
\end{figure}

As a consequence, it is hardly feasible to simulate a DPP with more than one
thousand points in a reasonable amount of time. As a DPP appears as the locations of the eigenvalues
of some matrix ensembles, it may seem faster and simpler to draw random matrices
and compute the eigenvalues with the optimized libraries to do so. There are
several drawbacks to this approach: 1) we cannot control the domain into which
the points fall, for some applications it may be important to simulate DPP
restricted to some compact sets, 2) as eigenvalues belong to $\R$ or $\C$, we
cannot imagine DPP in higher dimensions with this approach, 3) for stationary
DPP, it is often useful to simulate under the Palm measure (see below) which is
known to correspond to the distribution of the initial DPP with the first
eigenvalue removed so no longer corresponds to a random matrix ensemble.

Several refinements of the algorithm~\ref{algo:sampling} have been proposed
along the years but the most advanced contributions have been made for DPP on
lattices, which are of a totally different nature than continuous DPPs.
 We here propose to fasten the simulation of a DPP by reducing the number of
 eigenvalues considered and approximating the eigenfunctions by functions whose
 quadrature can be easily inversed to get rid of the rejection part.

 We evaluate the impact of these approximations by bounding the distances
 between the original distribution of the DPP to be simulated and the real
 distribution according to which the points are drawn.

 Actually, there are several notions of distances between the distributions of
 point processes (see \cite{DecreusefondUpperboundsRubinstein2010} and
 references therein). We focus here on the total variation distance and on the
 quadratic Wasserstein distance. The former counts the difference of the number
 of points in an optimal coupling between two distributions. The latter
 evaluates the matching distance between two realizations of an optimal coupling provided that it exists.

The paper is organized as follows. We first recall the definition and salient
properties of DPP. In Section \ref{sec:optimal}, we briefly introduce the
optimal transportation problem in its full generality and give some elements
dedicated to point processes. In Section~\ref{sec:distance}, we show how the
eigenvalues and eigenfunctions do appear in the evaluation of the distances
under scrutiny. In Section~\ref{sec:simul}, we apply these results to the
simulation of DPPs.

\section{Determinantal point processes}
\label{sec:prel}
Let $E$ be a Polish space, $\mathcal{O}(E)$ the family of all non-empty open
subsets of $E$ and $\mathcal{B}$ denotes the corresponding Borel
$\sigma$-algebra. In the sequel, $m$ is a Radon measure on $(E,\mathcal{B})$. Let $\NN$ be
the space of locally finite subsets in $E$, also called the configuration space:
\begin{equation*}
  \NN = \{ \xi \subset E \,: \, | \Lambda \cap \xi | < \infty \, \text{ for any compact set } \Lambda \subset E    \},
\end{equation*}
equipped with the topology of the vague convergence. We call
elements of $\NN$ configurations and identify a locally finite configuration
$\xi$ with the atomic Radon measure $\sum_{y \in \xi} \varepsilon_y$, where we
have written $\varepsilon_y$ for the Dirac measure at $y \in E$.

Next, let $\NN^f = \{ \xi \in \NN \, : \, | \xi | < \infty \}$ the space of all
finite configurations on $E$. $\NN^f$ is naturally equipped with the trace
$\sigma$-algebra $\F^f = \F |_{\NN^f}$.

A random point process is defined as a probability measure on $(\NN, \F)$. A
random point process $\mu$ is characterized by its Laplace transform, which is
defined for any measurable non-negative function $f$ on $E$ as
\begin{equation*}
  {\mathcal L}_\mu (f) = \int_\NN e^{- \sum_{x \in \xi} f(x)} \dif\mu(\xi)	.
\end{equation*}
%
Our notations are inspired by those of \cite{Georgii2005}, where the reader
can also find a brief summary of many properties of Papangelou intensities.
\begin{definition}
  We define the $m$-sample measure $L$ on $(\NN^f, \F^f)$ by the identity
  \begin{equation*}
    \int f(\alpha) \, \dif L(\alpha) = \sum_{n \ge 0} \frac{1}{n!} \int_{E^n} f(\{ x_1, \dots, x_n \}) \dif m(x_1) \ldots \dif m(x_n),
  \end{equation*}
  for any measurable nonnegative function $f$ on $\NN^f$.
\end{definition}
Point processes are often characterized via their correlation function defined as:
\begin{definition}[Correlation function]
  A point process $\mu$ is said to have a correlation function $\rho : \NN^f
  \rightarrow \R$ if $\rho$ is measurable and
  \begin{equation*}
    \int_\NN \sum_{\alpha \subset \xi, \ \alpha \in \NN^f} f ( \alpha) \dif \mu(\xi) = \int_{\NN^f} f ( \alpha )\, \rho(\alpha) \, \dif L(\alpha),
  \end{equation*}
  for all measurable nonnegative functions $f$ on $\NN^f$. For $\xi = \{ x_1,
  \dots, x_n \}$, we will write $\rho(\xi) = \rho_n(x_1,\dots,x_n)$
  and call $\rho_n$ the $n$-th correlation function, where $\rho_n$ is a
  symmetrical function on $E^n$.
\end{definition}
It can be noted that correlation functions can also be defined by the following
property, both characterizations being equivalent in the case of simple point
processes.
\begin{definition}
  A point process $\mu$ is said to have correlation functions $(\rho_n,\, n\ge 0)$ if for any $A_1,\dots,A_n$ disjoint bounded Borel subsets of $E$,
  \begin{equation*}
    \esp{\prod_{i=1}^n \xi(A_i)} = \int_{A_1\times \dots \times A_n} \rho_n(x_1,\dots,x_n) \dif m(x_1)\ldots \dif m( x_n)	.
  \end{equation*}
\end{definition}
Recall that $\rho_1$ is the mean density of particles with respect to $m$,
and
\begin{equation*}
  \rho_n(x_1,\dots,x_n)  \dif m(x_1)\ldots \dif m( x_n)
\end{equation*}
is the probability of finding a particle in the vicinity of each $x_i$,
$i=1,\dots,n$.

 Note
that
\begin{equation*}
  \NN_{E}^{f}=\bigcup_{n=0}^{\infty} \NN_{E}^{(n)}
\end{equation*}
where
\begin{equation*}
  \NN_{E}^{(n)}=\{\xi\in \NN_{E}^{f},\, \xi(E)=n\}.
\end{equation*}
Since $\NN_{E}^{(n)}$ can be identified with $E^{n}/\mathfrak S_{n}$ where
$\mathfrak S_{n}$ is the group of permutations over $n$ elements, every function $f\, :\, \NN_{E}^{f}\to \R$ is in fact equivalent to a family of
symmetric functions $(f_{n},\, n\ge 1)$ where $f_{n}$ goes from $E^{n}$ to $\R$.
For the sake of notations, we omit the index $n$ of $f_{n}$.
\begin{definition}
  A measure $\mu$ on $\NN_{E}^{f}$ is regular with respect to the reference
  measure $m$ when there exists
  \begin{align*}
    j \,:\, \bigcup_{n=0}^{\infty} \NN_{E}^{(n)}&\longrightarrow \R^{+}\\
    \{x_{1},\cdots,x_{n}\}&\longmapsto j_{n}(x_{1},\cdots,x_{n})
  \end{align*}
  where $j_{n}$ is symmetric on $E^{n}$ such that for
any measurable bounded  $f : \NN^{f} \rightarrow \R$,
\begin{equation}\label{eq_preliminaries:1}
  E[ f(\xi) ] = f(\emptyset)+ \sum_{n=1}^{\infty} \frac{1}{n!} \int_{E^n} f(x_1,\dots,x_n) j(x_1,\dots,x_n)\dif m(x_1)\dots \dif m(x_n).
\end{equation}
The function $j_{n}$ is called the $n$-th Janossy density. Intuitively, it can
be viewed as the probability to have exactly $n$ points in the vicinity
$(x_{1},\cdots, x_{n})$
\end{definition}
For details about the relationships between correlation functions and Janossy
densities, see  \cite{Daley2003}.

\subsection{Determinantal point processes}
\label{sec:dpp}

For details, we mainly refer to \cite{Shirai_2003}. A determinantal point on $X$ is characterized by a kernel $K$ and a reference
measure~$m$. The map $K$ is supposed to be an Hilbert-Schmidt operator from
$L^2(E,\, m)$ into $L^2(E,\, m)$ which satisfies the following conditions:
\begin{enumerate}
\item \label{item:1} $K$ is a bounded symmetric integral operator on $L^2(E,\,
  m)$, with kernel $K(.,.)$, i.e., for any $x\in E$,
  \begin{equation*}
    K f(x)=\int_{E}{K(x,y)f(y)\dif m(y)}.
  \end{equation*}
\item \label{item:2} The spectrum of $K$ is included in $[ 0,\, 1]$.
\item \label{item:3} The map $K$ is locally of trace class, i.e., for all
  compact $\Lambda\subset E$, the restriction $K_{\Lambda}=P_{\Lambda}K
  P_{\Lambda}$ of $K$ to $L^2(\Lambda,\, m)$
  is of trace class.
\end{enumerate}

\begin{definition}
  The determinantal measure on $\NN$ with characteristics $K$ and $m$ can be
  defined through its correlation functions:
  \begin{equation*}
    \rho_{n,\, K}(x_1,\cdots,\,\, x_n)=\det\bigl( K\left( x_k,\, x_l\bigr)
    \right)_{1\le k,l\le n},
  \end{equation*}
  and for $n=0$, $\rho_{0,\,K}(\emptyset)=1$.
  \end{definition}
There is a particular class of DPP which is the basic blocks on which general
DDP are built upon.
\begin{definition}
\label{def:projectionDPP}
  A DPP whose spectrum is reduced to the singleton $\{1\}$ is called a
  projection DPP. Actually, its kernel is of the form
  \begin{equation*}
    K_{\phi}(x,y)=\sum_{j=0}^{M} \phi_{j}(x)\phi_{j}(y)
  \end{equation*}
  where $M\in \N\cup\{\infty\}$ and  $(\phi_{j},\, j=0,\cdots, M)$ is a family
  of orthonormal functions of $L^{2}(E,m)$.

  If $M$ is finite then almost-all configurations of such a point process have
  $M$ atoms.
\end{definition}
Alternatively, when the spectrum of $K$ does not contain~$1$, we can define
  a DPP through its Janossy densities. In this situation, the properties of $K$ ensure that there exists a sequence $(\lambda_{i},\, i \ge
1)$ of elements of $[0,1)$ with no accumulation point but $0$ and a complete
orthonormal basis $(\phi_{i}, \, i\ge 1)$ of $L^{2}(m)$ such that
\begin{equation*}
  K_{\phi}(x,y)=\sum_{i\ge 1} \lambda_{i} \phi_{i}(x)\phi_{i}(y).
\end{equation*}
Note that if $L^{2}(E,\, m)$ is a $\C$-vector space, we must modify this definition
accordingly:
\begin{equation*}
  K_{\phi}(x,y)=\sum_{i\ge 1} \lambda_{i} \phi_{i}(x)\overline{\phi_{i}(y)}.
\end{equation*}

  For a compact subset $\Lambda\subset X$, the map $J_{\Lambda}$ is defined by:
\begin{equation*}
  J_{\Lambda} =\left(\Id-    K_{\Lambda}\right)^{-1} K_{\Lambda},
\end{equation*}
so that we have:
\begin{equation*}
  \left(\Id- K_{\Lambda} \right) \left(\Id+
    J_{\Lambda}\right) =\Id.
\end{equation*}

For any compact $\Lambda\subset
  E,$ the operator $J_{\Lambda}$ is an Hilbert-Schmidt, trace class
   operator, whose spectrum is included in $[0, +\infty)$. We denote by
   $J_{\Lambda}$ its kernel. For  any $n \in \N$,
  any compact $\Lambda\subset E$, and any $(x_1,\cdots,\, x_n) \in \Lambda^n$,
  the $n$-th Janossy density is given by:
  \begin{equation}\label{eq:4}
    j_{\Lambda}^{n}\left( x_1,\, \cdots,\, x_n\right)
    =\det\left( J_{\Lambda}\left( x_k,\, x_l\right)
    \right)_{1\le k,l\le n}.
  \end{equation}

We can now state how the characteristics of a DPP are modified by some usual
transformations on the configurations.
\begin{theorem}
  Let $\mu$ a DPP on $\R^{k}$ with kernel $K$ and reference measure $m=h\dif x$.
  Let $(\lambda_{n},\, n\ge 0)$ be its eigenvalues counted with multiplicity and
  $(\phi_{n}, \, n\ge 0)$ the corresponds eigenfunctions.

  \begin{enumerate}
  \item A random thinning of probability $p$ transforms $\mu$ into a DPP of
    kernel $pK$.
\item A dilation of ratio $\rho$ transforms $\mu$ into a DPP of kernel
  \begin{equation*}
   K_{\rho}(x,y)=\frac{1}{\rho}\, K(\rho^{-1/k}x,\, \rho^{-1/k}y).
  \end{equation*}
\item If $H$ is a $\mathcal C^{1}$-diffeomorphism on $E$, then
  \begin{align*}
    \mathcal H\, :\, \NN_{E}&\longrightarrow \NN_{E}\\
\sum_{x\in \xi}\varepsilon_{x}&\longrightarrow \sum_{x\in \xi}\varepsilon_{H(x)}
  \end{align*}
transforms $\mu$ into a DPP of kernel
\begin{equation*}
K_{H}(x,y)= K\left(H^{-1}(x),\, H^{-1}(y)\right)
\end{equation*}
and reference measure $m\circ H^{ -1}$, the image measure of $m$ by $H$, see
\cite{CamilierQuasiinvarianceintegrationparts2010}.
\item If $K(x,y)=k(x-y)$ then $\mu$ is stationary \cite{Lavancier2015} and thus admits a stationary
  Palm measure $\mu^{0}$. From \cite{Shirai_2003}, we know that $\mu^{0}$ is
  distributed as the DPP of kernel
  \begin{equation*}
    K^{0}(x,y)=\sum_{n=1}^{\infty} \lambda_{n}\phi_{n}(x)\phi_{n}(y).
  \end{equation*}
  \end{enumerate}
\end{theorem}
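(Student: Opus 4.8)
The plan is to exploit the fact that a DPP, being a simple point process, is completely determined by its correlation functions, together with the identity $\rho_{n,K}(x_1,\dots,x_n)=\det\bigl(K(x_k,x_l)\bigr)_{1\le k,l\le n}$ for the DPP with kernel $K$ and reference measure $m$. So for each of the four transformations I would compute the correlation functions (or, where more convenient, the Janossy densities of \eqref{eq:4}) of the image process, recognise them as a determinant built from the announced kernel, and separately verify that this candidate kernel still satisfies the three structural conditions required of a DPP kernel; uniqueness then completes the identification.

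For the thinning in (1), I would first record the elementary effect of an independent $p$-thinning on factorial moments: conditioning on the original configuration, the points lying in disjoint bounded Borel sets $A_1,\dots,A_n$ are retained according to independent Bernoulli$(p)$ variables, so $\esp{\prod_{i=1}^n\xi(A_i)}$ is simply multiplied by $p^n$. Hence the thinned process has correlation functions $p^n\rho_{n,K}=\det\bigl(p\,K(x_k,x_l)\bigr)_{k,l}$. It then remains to observe that, since $p\in[0,1]$, the operator $pK$ is again a bounded symmetric integral operator, locally of trace class, with spectrum contained in $[0,p]\subset[0,1]$, hence an admissible kernel, so the thinned process is the DPP with kernel $pK$.

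For the diffeomorphism in (3), I would start from $\xi(A_i)=\eta(H^{-1}(A_i))$ for $\xi=\mathcal H\eta$ and write, for disjoint bounded Borel sets $A_i$,
\begin{equation*}
  \esp{\prod_{i=1}^n\xi(A_i)}=\int_{H^{-1}(A_1)\times\cdots\times H^{-1}(A_n)}\det\bigl(K(y_k,y_l)\bigr)_{k,l}\,\dif m(y_1)\cdots\dif m(y_n),
\end{equation*}
and then perform the change of variables $y_i=H^{-1}(x_i)$ to rewrite this as the integral over $A_1\times\cdots\times A_n$ of $\det\bigl(K(H^{-1}(x_k),H^{-1}(x_l))\bigr)_{k,l}$ against $\dif(m\circ H^{-1})(x_1)\cdots\dif(m\circ H^{-1})(x_n)$. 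This exhibits $K_H(x,y)=K(H^{-1}(x),H^{-1}(y))$ as the kernel, with respect to $m\circ H^{-1}$, of the image process. To check admissibility of $K_H$, I would note that $Uf=f\circ H$ is a unitary isomorphism from $L^2(m\circ H^{-1})$ onto $L^2(m)$ and that a direct computation gives $K_H=U^{-1}KU$, whence $K_H$ is self-adjoint, Hilbert--Schmidt, locally of trace class, and has the same spectrum as $K$, in particular contained in $[0,1]$; alternatively this is the content of \cite{CamilierQuasiinvarianceintegrationparts2010}. Statement (2) I would obtain as the special case $H(x)=\rho^{1/k}x$ on $\R^k$ with $m$ Lebesgue measure: here $m\circ H^{-1}=\rho^{-1}\dif x$, and rewriting the correlation measure $\det\bigl(K_H(x_k,x_l)\bigr)_{k,l}\prod_i\dif(m\circ H^{-1})(x_i)$ with respect to Lebesgue measure absorbs the factor $\rho^{-n}$ into the determinant, yielding the kernel $\rho^{-1}K_H(x,y)=\frac1\rho K(\rho^{-1/k}x,\rho^{-1/k}y)$ with reference measure $\dif x$. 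For (4), I would simply combine two known results: stationarity when $K(x,y)=k(x-y)$ follows from \cite{Lavancier2015}, and the fact that the Palm measure of a DPP is again a DPP whose kernel, in the stationary case, is obtained by deleting the first eigenpair, giving $K^0(x,y)=\sum_{n\ge1}\lambda_n\phi_n(x)\phi_n(y)$, is taken from \cite{Shirai_2003}.

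The only delicate issues I foresee are bookkeeping ones: carrying the Jacobian / Radon--Nikodym factors correctly through the change of reference measure in (2) and (3), and making sure each candidate kernel still satisfies all three operator-theoretic conditions in the definition of a DPP kernel --- both handled cleanly by the unitary-conjugation argument above. Part (4) is essentially a matter of quoting \cite{Lavancier2015} and \cite{Shirai_2003} accurately, the substantive work there lying in those references.
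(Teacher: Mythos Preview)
The paper does not actually prove this theorem: it is presented as a compendium of known transformation rules, with items (3) and (4) attributed to \cite{CamilierQuasiinvarianceintegrationparts2010}, \cite{Lavancier2015} and \cite{Shirai_2003}, and items (1) and (2) left entirely without argument. Your proposal therefore supplies justifications that the paper simply omits, and the correlation-function route you outline is the standard and correct one in each case. Your unitary-conjugation remark for (3) is precisely what the paper alludes to in the remark immediately following the theorem (``It is straightforward to see that the spectrum of $K_H$ \ldots is the same as the spectrum of $K$''). One minor bookkeeping point: in deducing (2) from (3) you take $m$ to be Lebesgue measure, while the hypothesis allows $m=h\,\dif x$; the formula for $K_\rho$ as stated implicitly uses Lebesgue as the reference measure of the dilated process, so your computation matches the intended reading, but a general density $h$ would strictly also have to be transported.
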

\begin{remark}
  It is straightforward to see
that the spectrum of $K_{H}$ in $L^{2}(E,\, m\circ H^{-1})$ is the same as the
spectrum of $K$ in $L^{2}(E,\, m)$. Actually, this transformation will be a
particular case of the optimal maps obtained in solving the $\MKP$ for the
Wassertein-2 distance (see Theorem~\ref{thm:constructionTransport}).
\end{remark}
\begin{remark}
  Recall that for a Poisson process, we can obtain a realization of its Palm
  measure by just adding an atom at~$0$ to any of its  realization.
  Unfortunately, we know from \cite{HoughDeterminantalprocessesindependence2006} that the
  eigenvalues of a DPP cannot be obtained as measurable functions of the configurations. Hence it is
  hopeless to construct a realization of $\mu^{0}$ from a realization of $\mu$.
\end{remark}

\subsubsection{Simulation of DPP}
\label{sec:simulDPP}

The simulation algorithm introduced
\cite{HoughDeterminantalprocessesindependence2006} is up to now the most
efficient to produce random configurations distributed according to a determinantal point
process. It is based on the following lemma.
\begin{lemma}
  \label{lem:eigenvalues}
  Let $\mu_{K,m}$ a determinantal point process of a trace-class kernel $K$ and reference
  measure $m$. Let $\sp(K;L^{2}(m))=\{\lambda_{n}, i\ge 0\}$ and $(\phi_{n},\,
  i\ge 0)$ a CONB of $L^2(E,\,m)$ composed of eigenfunctions of $K$. Let
  $(B(\lambda_{n}),i\ge 0)$ a family of independent Bernoulli random variables
  of respective parameter $\lambda_{n}$. Let
  \begin{equation*}
    I=\{n\ge 0, \, B(\lambda_{n})=1\}.
  \end{equation*}
  Since $\esp{|I|}=\sum_{n=0}^{\infty}\lambda_{n}<\infty$, $I$ is a.s. a finite subset of
  $\N$. Consider
  \begin{equation*}
    K_{I}(x,y)=\sum_{n\in I} \phi_{n}(x)\phi_{n}(y)
  \end{equation*}
  and
  \begin{equation*}
    p_{I}(x_{1},\cdots,x_{|I|})=\frac{1}{|I|!}\det\Bigl( K_{I}(x_{k},x_{l}),\, 1\le k,l\le |I| \Bigr).
  \end{equation*}
  Construct a random configuration  $\xi$ as follows: Given $I$, draw
  points $(W_{1},\cdots,W_{|I|})$ with joint density $p_{I}$. Then $\xi$ is
  distributed according to $\mu_{K,m}$.
\end{lemma}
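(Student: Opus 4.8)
The plan is to identify the configuration $\xi$ produced by the algorithm as a \emph{random mixture of projection DPPs} and to compute its correlation functions. Conditionally on the random index set $I$, the kernel $K_I(x,y)=\sum_{n\in I}\phi_n(x)\phi_n(y)$ is the kernel of the orthogonal projection of $L^2(E,m)$ onto $\vect\{\phi_n:n\in I\}$, a subspace of finite dimension $|I|$; by Definition~\ref{def:projectionDPP} the associated projection DPP $\mu_{K_I,m}$ has almost surely exactly $|I|$ atoms, and the symmetrised joint law of these atoms with respect to $\dif m^{\otimes|I|}$ is precisely $p_I=\frac{1}{|I|!}\det(K_I(x_k,x_l))_{1\le k,l\le|I|}$. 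Indeed, for a point process carried by configurations of fixed cardinality $|I|$ the top Janossy density equals the top correlation function $\rho_{|I|,K_I}(x_1,\dots,x_{|I|})=\det(K_I(x_k,x_l))$, and $\int_{E^{|I|}}\det(K_I)\,\dif m^{\otimes|I|}=|I|!$ makes $p_I$ a probability density. Hence, conditionally on $I$, $\xi$ is distributed as $\mu_{K_I,m}$, so $\xi$ is the announced mixture, and it suffices to show this mixture is $\mu_{K,m}$.

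I would then pass from the components to the mixture at the level of correlation functions. Taking the expectation over $I$ in the identity that defines the correlation functions of $\mu_{K_I,m}$, and invoking Tonelli's theorem (restrict to $f\ge0$; each $\rho_{n,K_I}\ge0$ as a Gram determinant), one gets that $\xi$ admits the correlation functions $x\mapsto\espp{I}{\det(K_I(x_k,x_l))_{1\le k,l\le n}}$. The crux is then the identity
\begin{equation*}
  \espp{I}{\det\bigl(K_I(x_k,x_l)\bigr)_{1\le k,l\le n}}=\det\bigl(K(x_k,x_l)\bigr)_{1\le k,l\le n},
\end{equation*}
which I would prove by applying the Cauchy--Binet formula twice. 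With $\Phi_I=(\phi_j(x_k))_{1\le k\le n,\ j\in I}$, the matrix $(K_I(x_k,x_l))_{k,l}$ is the Gram matrix $\Phi_I\Phi_I^{\,t}$, so Cauchy--Binet gives $\det(K_I(x_k,x_l))=\sum_{S\subset I,\,|S|=n}\det(\phi_j(x_k))_{k,\,j\in S}^2$. Taking the expectation and using the independence of the $B(\lambda_n)$, so that $\P(S\subset I)=\prod_{j\in S}\lambda_j$, yields $\espp{I}{\det(K_I(x_k,x_l))}=\sum_{|S|=n}\bigl(\prod_{j\in S}\lambda_j\bigr)\det(\phi_j(x_k))_{k,\,j\in S}^2$. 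On the other hand $K(x,y)=\sum_j\bigl(\sqrt{\lambda_j}\,\phi_j(x)\bigr)\bigl(\sqrt{\lambda_j}\,\phi_j(y)\bigr)$, so the very same Cauchy--Binet formula identifies $\det(K(x_k,x_l))$ with that sum; this proves the identity, whence $\xi$ and $\mu_{K,m}$ share all correlation functions. Since the determinantal correlation functions obey the Hadamard bound $\rho_{n,K}(x_1,\dots,x_n)\le\prod_i K(x_i,x_i)$, which on each compact grows at most geometrically in $n$, the classical uniqueness criterion of Lenard shows that a simple point process is characterised by correlation functions of that type, and the proof is complete.

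The step I expect to be the main obstacle is the bookkeeping in the case $M=\infty$, i.e. when the spectrum of $K$ is infinite: one must justify Cauchy--Binet for ``infinitely wide'' Gram matrices and the interchange of the sum over $S$ with $\espp{I}{\cdot}$. This is exactly where the trace-class hypothesis enters: $\sum_j\lambda_j=\trace K<\infty$ forces $I$ to be almost surely finite (as already used in the statement) and makes $\sum_{|S|=n}\prod_{j\in S}\lambda_j<\infty$, so all the series involved converge absolutely, and a monotone/dominated convergence argument over finite truncations of the spectrum upgrades the finite-rank identity to the general case. A secondary subtlety is the appeal to uniqueness from the correlation functions; to sidestep it one can run the same Cauchy--Binet computation with the weighted inner products $\langle\phi_j,e^{-f}\phi_l\rangle_{L^2(m)}$ in place of $\langle\phi_j,\phi_l\rangle$, thereby matching $\espp{I}{\mathcal L_{\mu_{K_I,m}}(f)}$ with the Fredholm-determinant expression for $\mathcal L_{\mu_{K,m}}(f)$ directly.
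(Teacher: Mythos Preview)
Your argument is correct and is essentially the standard proof of this result due to Hough--Krishnapur--Peres--Vir\'ag: identify the output as a Bernoulli mixture of projection DPPs, then match correlation functions via Cauchy--Binet and the independence of the Bernoulli variables. However, the paper does \emph{not} give its own proof of this lemma; it simply quotes the result from \cite{HoughDeterminantalprocessesindependence2006} as the basis of Algorithm~\ref{algo:sampling}, so there is nothing in the paper to compare your proof against beyond the attribution. Your handling of the two technical points (the infinite Cauchy--Binet via trace-class truncation, and uniqueness either through Lenard's criterion or directly via Laplace functionals and Fredholm determinants) is appropriate and matches how the original reference proceeds.
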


In the following, let $\phi_I(x)=(\phi_n(x),n\in I)$. 

 \begin{algorithm}[H]
     \KwData{$R,I$}
  \KwResult{$W_1,\cdots,W_{|I|}$}
  \BlankLine
  Draw $W_1$ from the distribution with density $\|\phi_I(x)\|_{\C^{|I|}}^2/|I|$\;
  $e_1\leftarrow \phi_I(W_1)/\|\phi_I(W_1)\|_{\C^{|I|}}$\;
  \For{$i\leftarrow 2$ \KwTo $|I|$}
    {Draw $W_i$ from the distribution with density
      \begin{equation*}
  p_{i}(x)=      \frac{1}{|I|-i+1}\left(\left\|\phi_I(x)\right\|_{\C^{|I|}}^2-\sum_{k=1}^{i-1}|e_k.\phi_I(x)|^2\right)
      \end{equation*}
      \noindent $u_i\leftarrow \phi_I(W_i)-\sum_{k=1}^{i-1}e_k.\phi_I(W_i)\ e_k$\;
      $e_{i}\leftarrow u_i/\|u_i\|_{\C^{|I|}}$\;
      }
  \caption{Sampling of the locations of the points given the set $I$ of active Bernoulli random variables.}
  \label{algo:sampling}
\end{algorithm}

\medskip

We have two kind of difficulties here: the drawing of $W_{i}$ according to a
density function with no particular feature so we usually have to  resort to rejection
sampling; when $|I|$ is large the computation of the density may be costly as it
contains a sum of $|I|$ terms. Figure~\ref{fig:peaks} also suggests that when the number of
points becomes high, the profile of the conditional density might be very
chaotic with high peaks and deep valleys, involving a large number of rejections
in the sampling of this density. These are the problems we intend to address in
the following.

\begin{remark}
Note that this algorithm is fully applicable even if $E$ is a discrete finite
space. It has been improved in several ways \cite{MAL-044,2018arXiv180208471T}
but when it comes to simulate a DPP with a large number of points as it is
necessary in some applications \cite{Bardenet2016}, the best way
remains to use MCMC methods \cite{pmlr-v49-anari16}. Unfortunately, by its very
construction, this last approach is not feasible when the underlying space $E$ is continuous.

\end{remark}

\section{Distances derived from optimal transport}
\label{sec:optimal}
For details on optimal transport in $\R^{d}$ and in general Polish spaces, we
refer to \cite{VillaniOptimaltransportold2007,Villani2003}. For $X$ and $Y$ two
Polish spaces, for $\mu$ (respectively $\nu$) a probability measure on $X$
(respectively $Y$), $\Sigma(\mu, \, \nu)$ is the set of probability measures on
$X\times Y$ whose first marginal is $\mu$ and second marginal is $\nu$. We also
need to consider a lower semi continuous function $c$ from $X\times Y$ to
$\R^+$. The Monge-Kantorovitch problem associated to $\mu$, $\nu$ and $c$,
denoted by $\MKP(\mu$, $\nu$, $c$) for short, consists in finding
\begin{equation}
  \label{eq:3}
  \inf_{\gamma\in \Sigma(\mu, \, \nu)}\int_{X\times Y}c(x,y)\dif \gamma(x,\, y).
\end{equation}
More precisely, since $X$ and $Y$ are Polish and $c$ is l.s.c., it is known from
the general theory of optimal transportation, that there exists an optimal
measure $\gamma\in \Sigma(\mu,\, \nu)$ and that the minimum coincides with
\begin{equation*}
  \sup_{(F,\, G)\in \Phi_c}(\int_X F\dif \mu+\int_Y G\dif \nu),
\end{equation*}
where $(F,\, G)$ are such that $F\in L^1(\dif \mu)$, $G\in L^1(\dif \nu)$ and
$F(x)+G(y)\le c(x,\, y)$. We will denote by $\T_c(\mu, \, \nu)$ the value of the
infimum in \eqref{eq:3}.
In the sequel, we  need the following
theorem of Brenier:
\begin{theorem}
  \label{thm:optimal-transport}
  Let $c(x,y)=2^{-1}\|x-y\|^{2}$ be the Euclidean distance on $\R^{k}$ and $\mu,\, \nu$ two probability
  measures with finite second moment. If the measure $\mu$ is absolutely
  continuous with respect to the Lebesgue measure, there exists a unique optimal
  measure $\gamma_{\opt}$ which realizes the minimum in \eqref{eq:3}. Moreover,
  there exists a unique function $\psi\,:\, \R^{k}\to \R$ such that
  \begin{equation*}
    y=x - \nabla \psi(x), \ \gamma_{\opt} \text{-a.s.}
  \end{equation*}
  Then, we have
  \begin{equation*}
    \T_e(\mu,\nu)=\frac12\int_{\R^{k}}\|\nabla \psi\|_{\R^{k}}^{2}\dif \mu.
  \end{equation*}
  The square root of $\T_e(\mu,\nu)$ defines a distance on $\proba(\R^{k})$, the
  set of probability measures on $\R^{k}$,
  called the Wasserstein-2 distance.
\end{theorem}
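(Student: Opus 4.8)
The plan is to derive the statement from the Kantorovich duality recalled above, following Brenier's classical argument. I would start by expanding the quadratic cost, $c(x,y)=\frac12\|x\|^{2}-\langle x,y\rangle+\frac12\|y\|^{2}$, and, for an optimal dual pair $(F,G)$, introducing $\varphi(x)=\frac12\|x\|^{2}-F(x)$ and $\varphi^{*}(y)=\frac12\|y\|^{2}-G(y)$; the finiteness of the second moments of $\mu$ and $\nu$ makes $\frac12\|\cdot\|^{2}$ integrable against each measure, so this substitution is licit and the membership $F\in L^{1}(\mu)$, $G\in L^{1}(\nu)$ is preserved. The dual constraint $F(x)+G(y)\le c(x,y)$ then reads $\langle x,y\rangle\le\varphi(x)+\varphi^{*}(y)$. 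Replacing $(F,G)$ by the associated $c$-transform pair does not decrease the dual functional, so one may assume $\varphi$ equals the Legendre transform of $\varphi^{*}$; in particular $\varphi$ is convex and lower semicontinuous, and being finite it is locally Lipschitz on the interior of the set where it is finite.

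Next I would use that, for any optimal plan $\gamma_{\opt}$ (whose existence is the general fact quoted above), the inequality in the dual constraint is saturated $\gamma_{\opt}$-almost everywhere, i.e. $\langle x,y\rangle=\varphi(x)+\varphi^{*}(y)$ for $\gamma_{\opt}$-almost every $(x,y)$; by the Fenchel equality case this means $y\in\partial\varphi(x)$. Here the absolute continuity of $\mu$ enters decisively: a finite convex function on $\R^{k}$ is differentiable outside a Lebesgue-null set (Rademacher, since it is locally Lipschitz), hence outside a $\mu$-null set, so $\partial\varphi(x)=\{\nabla\varphi(x)\}$ for $\mu$-a.e.\ $x$. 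Therefore $y=\nabla\varphi(x)$, $\gamma_{\opt}$-a.s., and setting $\psi(x)=\frac12\|x\|^{2}-\varphi(x)$ gives $y=x-\nabla\psi(x)$. This simultaneously forces $\gamma_{\opt}=(\Id\times\nabla\varphi)_{*}\mu$, an object depending on $\mu$ alone, which yields uniqueness of the optimal plan; and $\nabla\psi$ is then determined $\mu$-a.e., hence $\psi$ up to an additive constant.

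The value formula follows by transporting the cost through this map: $\T_e(\mu,\nu)=\int\frac12\|x-y\|^{2}\,\dif\gamma_{\opt}(x,y)=\frac12\int_{\R^{k}}\|\nabla\psi(x)\|_{\R^{k}}^{2}\,\dif\mu(x)$. For the distance claim, symmetry is clear, $\T_e(\mu,\nu)=0$ forces $\gamma_{\opt}$ to be supported on the diagonal hence $\mu=\nu$, and the triangle inequality comes from the gluing lemma: given couplings of $(\mu,\nu)$ and of $(\nu,\eta)$, one glues them along the common $\nu$-marginal to get a coupling of $(\mu,\eta)$ and applies Minkowski's inequality in $L^{2}$ of the glued probability measure, obtaining $\sqrt{\T_e(\mu,\eta)}\le\sqrt{\T_e(\mu,\nu)}+\sqrt{\T_e(\nu,\eta)}$.

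The main obstacle I anticipate is the first two steps taken together: justifying that the $c$-transform really produces a convex potential $\varphi$, that the dual constraint holds with equality $\gamma_{\opt}$-almost everywhere on the support of the optimal plan (the content of cyclical monotonicity and complementary slackness), and that convexity forces differentiability off a Lebesgue-null set. Once these are in place, identifying the optimal map, the uniqueness statements, and the value and distance properties are routine bookkeeping.
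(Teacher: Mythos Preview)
The paper does not prove this statement: it is quoted as ``the following theorem of Brenier'' and attributed to the references \cite{VillaniOptimaltransportold2007,Villani2003}, with no argument supplied. Your sketch is the standard proof one finds in those very references (Kantorovich duality, passage to convex potentials via the substitution $\varphi=\tfrac12\|\cdot\|^{2}-F$, the equality case in Fenchel--Young, and Rademacher/Alexandrov differentiability of convex functions to turn the subdifferential inclusion into a map), so there is nothing to compare: you have simply filled in what the paper leaves to the literature, and your outline is correct.
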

For $c$ a distance on $X=Y$, $\T_{c}$ also defines a distance on
$\proba(\R^{k})$, often called Kantorovitch-Rubinstein or Wasserstein-1
distance. It admits the alternative  characterization.
\begin{theorem}[See \cite{DudleyRealanalysisprobability2002}]
  \label{thm:KR}Let $c$ be a
  distance on the Polish space $(X,d_{X})$.
  For $\mu$ and $\mu $ two probability measures on $X$,
  \begin{equation*}
    \T_{c}(\mu,\mu)=\sup_{f\in \Lip_{1}}\left( \int_{X}f\dif \mu-\int_{X}f\dif \nu \right)
  \end{equation*}
  where
  \begin{equation*}
    \Lip_{1}=\left\{ f\, :X\to \R, \ \forall x,y\in X, |f(x)-f(y)|\le d_{X}(x,y) \right\}.
  \end{equation*}
Note that $d_{X}$ is the distance  which defines the topology of the Polish
space $X$, it  may not be  identical to $c$.
\end{theorem}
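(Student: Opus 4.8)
The plan is to reduce the statement to the general Kantorovich duality recalled just above, exploiting the fact that, when the cost $c$ is itself a distance, the dual potentials $(F,G)$ may be replaced, with no loss, by a pair $(f,-f)$ with $f\in\Lip_1$. One inequality is elementary: given $f\in\Lip_1$ and any coupling $\gamma\in\Sigma(\mu,\nu)$, the bound $f(x)-f(y)\le c(x,y)$ gives
\begin{equation*}
  \int_X f\dif\mu-\int_X f\dif\nu=\int_{X\times X}\bigl(f(x)-f(y)\bigr)\dif\gamma(x,y)\le\int_{X\times X}c(x,y)\dif\gamma(x,y),
\end{equation*}
and taking the infimum over $\gamma$ and then the supremum over $f$ yields $\sup_{f\in\Lip_1}\bigl(\int_X f\dif\mu-\int_X f\dif\nu\bigr)\le\T_c(\mu,\nu)$.

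For the reverse inequality I would start from $\T_c(\mu,\nu)=\sup_{(F,G)\in\Phi_c}\bigl(\int_X F\dif\mu+\int_X G\dif\nu\bigr)$ and improve an arbitrary admissible pair by a double $c$-transform. Put $\tilde F(x)=\inf_{y\in X}\bigl(c(x,y)-G(y)\bigr)$; then $\tilde F\ge F$ pointwise, the pair $(\tilde F,G)$ is still admissible, and the triangle inequality together with the symmetry of $c$ gives $|\tilde F(x)-\tilde F(x')|\le c(x,x')$, so that $\tilde F\in\Lip_1$. The key observation is that for a $1$-Lipschitz function and a genuine distance one has $\inf_{x}\bigl(c(x,y)-\tilde F(x)\bigr)=-\tilde F(y)$ — this value is attained at $x=y$ and is a lower bound by the Lipschitz estimate — so that $\tilde G:=\inf_x\bigl(c(x,y)-\tilde F(x)\bigr)=-\tilde F\ge G$. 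Consequently
\begin{equation*}
  \int_X F\dif\mu+\int_X G\dif\nu\le\int_X\tilde F\dif\mu+\int_X\tilde G\dif\nu=\int_X\tilde F\dif\mu-\int_X\tilde F\dif\nu\le\sup_{f\in\Lip_1}\Bigl(\int_X f\dif\mu-\int_X f\dif\nu\Bigr),
\end{equation*}
and taking the supremum over $(F,G)$ closes the argument (the distance axioms for $\T_c$ asserted above then follow from the same dual formula, tested against bounded Lipschitz functions, together with the gluing lemma for couplings).

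The step I expect to cost the most effort is integrability. A $1$-Lipschitz function need not belong to $L^1(\mu)$ or $L^1(\nu)$, so the identity $\int_X\tilde F\dif\mu-\int_X\tilde F\dif\nu$ and its comparison with $\int_X F\dif\mu+\int_X G\dif\nu$ must be justified rather than merely written down. I would first normalize $\tilde F$ so that $\tilde F(x_0)=0$ at a fixed base point, which gives $|\tilde F(x)|\le c(x,x_0)$, and then either restrict to the only nontrivial case, in which $\mu$ and $\nu$ have finite first moment with respect to $c$ (otherwise $\T_c=+\infty$ and there is nothing to prove), or run the whole argument first with the truncated cost $c\wedge N$, for which every integrand is bounded, and pass to the limit $N\to\infty$ by monotone convergence. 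Handling that limiting passage carefully, rather than the bookkeeping of the $c$-transforms, is where the actual work sits.
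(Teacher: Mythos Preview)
The paper does not prove this theorem; it is simply quoted with a reference to Dudley, so there is no in-paper argument to compare against. Your sketch is the standard one: the trivial inequality from testing a coupling against a Lipschitz integrand, and the reverse inequality by taking an admissible dual pair $(F,G)$, replacing it by its $c$-transform, and using that for a genuine distance the $c$-transform of a $1$-Lipschitz function is its negative. The integrability caveat you flag (reduce to finite first $c$-moment, or truncate and pass to the limit) is exactly the right place to be careful, and your proposed fixes are the usual ones.

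One point worth making explicit: throughout your argument you use $f(x)-f(y)\le c(x,y)$ and $|\tilde F(x)-\tilde F(x')|\le c(x,x')$, i.e.\ you are taking $\Lip_1$ to mean $1$-Lipschitz with respect to the cost $c$. The paper's displayed definition of $\Lip_1$ uses $d_X$ instead, and then adds the remark that $d_X$ need not equal $c$. Read literally, that version of the statement is false (take $c=2d_X$). Your interpretation---Lipschitz with respect to $c$, with $d_X$ only providing the Polish topology needed for the general Kantorovich duality---is the correct one and is what the cited reference actually proves; it would be worth saying so when you write this up.
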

The next result is found in \cite[Chapter 7]{Villani2003}.
\begin{theorem}
\label{thm:strongerTopology}The topologies induced by $\T_{c}$ and $\T_{e}$ on $\proba(\R^{k})$ are
strictly stronger than the topology of convergence in distribution.
\end{theorem}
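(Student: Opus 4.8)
The statement bundles two claims: that convergence in $\T_{c}$ --- and in $\T_{e}$ --- forces convergence in distribution, so that each topology sits above the weak one; and that the reverse fails, which is the force of the word \emph{strictly}. Throughout my plan I take $c(x,y)=\|x-y\|$, the genuine (unbounded) Euclidean distance on $\R^{k}$, so that $\T_{c}$ is the Wasserstein-1 distance; keeping $c$ unbounded, rather than a bounded distance inducing the same topology, is essential, and I will flag why at the end.

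For the domination of the weak topology, my plan is to argue through optimal couplings. Given $\mu_{n}\to\mu$ in $\T_{e}$, I would invoke the existence result recalled just before Brenier's theorem to pick an optimal plan $\gamma_{n}\in\Sigma(\mu_{n},\mu)$, so that $\int\tfrac12\|x-y\|^{2}\dif\gamma_{n}=\T_{e}(\mu_{n},\mu)\to 0$; Markov's inequality then yields $\gamma_{n}(\{\|x-y\|>\varepsilon\})\le 2\,\T_{e}(\mu_{n},\mu)/\varepsilon^{2}\to 0$ for every $\varepsilon>0$. For a bounded $L$-Lipschitz $f$ I would bound $|\int f\dif\mu_{n}-\int f\dif\mu|=|\int(f(x)-f(y))\dif\gamma_{n}|\le L\varepsilon+2\|f\|_{\infty}\,\gamma_{n}(\{\|x-y\|>\varepsilon\})$, let $n\to\infty$ and then $\varepsilon\to 0$, and conclude $\mu_{n}\to\mu$ in distribution since bounded Lipschitz functions determine weak convergence on $\R^{k}$ (a classical fact, see \cite{DudleyRealanalysisprobability2002}). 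Equivalently, writing $\gamma_{n}$ as the law of $(X_{n},Y_{n})$, one has $Y_{n}\sim\mu$ and $X_{n}-Y_{n}\to 0$ in probability, so Slutsky's lemma gives $X_{n}\Rightarrow\mu$. The case of $\T_{c}$ is the same argument with the Markov estimate $\gamma_{n}(\{\|x-y\|>\varepsilon\})\le\T_{c}(\mu_{n},\mu)/\varepsilon$; one may also read it off the Kantorovich--Rubinstein duality of Theorem~\ref{thm:KR}. Finally, Cauchy--Schwarz applied to an optimal $\T_{e}$-plan gives $\T_{c}\le\sqrt{2\,\T_{e}}$, so $\T_{e}$-convergence in fact implies $\T_{c}$-convergence, and the two halves cohere.

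For strictness I would exhibit the classical escaping-mass example: fix a unit vector $u\in\R^{k}$ and set $\mu_{n}=(1-\tfrac1n)\delta_{0}+\tfrac1n\delta_{nu}$. Then $\int f\dif\mu_{n}=(1-\tfrac1n)f(0)+\tfrac1n f(nu)\to f(0)=\int f\dif\delta_{0}$ for every bounded continuous $f$, so $\mu_{n}\to\delta_{0}$ in distribution; but $\Sigma(\mu_{n},\delta_{0})=\{\mu_{n}\otimes\delta_{0}\}$ (any coupling with second marginal $\delta_{0}$ is concentrated on $\R^{k}\times\{0\}$), and this plan moves mass $\tfrac1n$ across distance $n$, whence $\T_{c}(\mu_{n},\delta_{0})=\tfrac1n\cdot n=1$ and $\T_{e}(\mu_{n},\delta_{0})=\tfrac12\cdot\tfrac1n\cdot n^{2}=\tfrac n2$, neither of which tends to $0$. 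Combined with the first part, this shows both topologies are strictly stronger than the topology of convergence in distribution.

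I do not expect a genuine obstacle: the only substantive ingredient is the coupling-to-weak-convergence passage above (equivalently \cite[Chapter~7]{Villani2003}), which is routine. Where care is needed is purely conventional. First, the statement is false for a \emph{bounded} distance $c$ inducing the Euclidean topology --- for such a $c$, $\T_{c}$ is well known to metrize the weak topology itself --- so ``distance'' here must be read as the unbounded Euclidean metric. Second, $\T_{e}$ (respectively $\T_{c}$) is finite, hence a true metric rather than an extended one, only on the set of probability measures with finite second (respectively first) moment; that is the natural domain on which the statement is understood, and the measures of the strictness step lie in it (they have moments of every order), so the strictness is genuinely witnessed there.
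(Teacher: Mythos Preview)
Your argument is correct: the coupling-plus-Markov step cleanly gives domination of the weak topology, and the escaping-mass sequence $\mu_{n}=(1-\tfrac1n)\delta_{0}+\tfrac1n\delta_{nu}$ is the standard witness that the inclusion is strict. Your caveats about reading $c$ as the unbounded Euclidean metric and about restricting to measures with finite moments are on point and match the paper's conventions.

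As for comparison with the paper: there is none to make. The paper does not prove this theorem; it merely states it and attributes it to \cite[Chapter~7]{Villani2003}. So you have supplied a self-contained proof where the paper simply invokes a reference. Your argument is in fact the one found in that reference (coupling characterization of Wasserstein convergence, plus the same or an equivalent escaping-mass counterexample), so there is no methodological divergence either.
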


\section{Distances between point processes}
\label{sec:distance}
There are several ways to define a distance between point processes. We here
focus on two of them. They are constructed similarly: Choose a cost function $c$ on
$\NN_{E}$ and then consider $\T_{c}$ defined by the solution of $\MKP(\mu,\nu,c)$
for $\mu$ and $\nu $ two elements of $\proba(\NN_{E})$.

\begin{definition}
  Consider $\dist_{\TV}$ the distance in total variation between two configurations (viewed as
  discrete measures):
  \begin{equation*}
    \dist_{\TV}(\xi,\zeta)=(\xi\Delta \zeta)(E)
  \end{equation*}
  where $\xi\Delta \zeta$ is the symmetric difference between the two sets $\xi$
  and $\zeta$, i.e. we count the number of distinct atoms between $\xi$ and
  $\zeta$.   Then, for $\mu$ and $\nu $ belonging to $\proba(\NN_{E})$, their
  Kantorovitch-Rubinstein distance is defined by
  \begin{align}
    \T_{\KR}(\mu,\nu)& = \inf_{\substack{\distrib(\xi)=\mu\\ \distrib(\zeta)=\nu}}\esp{(\xi\Delta \zeta)(E)}\notag\\
    &=\sup_{f\in \Lip_{1}(\NN_{E})}\left( \int_{\NN_{E}}f(\xi)\dif\mu(\xi)- \int_{\NN_{E}}f(\zeta)\dif\nu(\zeta) \right).\label{eq:1}
  \end{align}
  \end{definition}
  \begin{remark}
    For any compact set $\Lambda\subset E$, the map
    \begin{align*}
      \Xi_{\Lambda}\, :\, \NN_{E}&\longrightarrow \N\\
      \xi&\longmapsto \xi(\Lambda)
    \end{align*}
    is Lipschitz. Let $(\mu_{n},\, n\ge 1)$ be a sequence of point processes and denote by $\xi_{n}$ an
    $\NN_{E}$-valued random variable whose distribution is $\mu_{n}$. Similarly,
    for another element $\nu\in \proba(\NN_{E})$, let $\zeta$ be an
    $\NN_{E}$-valued random variable whose distribution is $\nu$.
    In view of \eqref{eq:1} and Theorem~\ref{thm:KR}, if $\T_{\KR}(\mu_{n},\nu)$ tends to zero then for any
    compact set $\Lambda$, the sequence of random variables
    $(\xi_{n}(\Lambda),\, n\ge 1)$
    converges in distribution to~$\zeta(\Lambda)$.
  \end{remark}

For the quadratic distance, we first consider, on $E=\R^k$, the cost function as $\rho(x,y)=2^{-1}\|x-y\|^2$ and we  define a cost between configurations (see also
\cite{BarbourCompoundPoissonprocess2002,BarbourCompoundPoissonapproximation2000,barbour_stein-chen_1992}) as the 'lifting' of $\rho$
on $\NN_E$:
\begin{equation*}
  c(\xi_1,\xi_2)=\inf \left\{ \int \rho(x,y)\  \dif\beta(x,y),\
    \beta\in \Sigma({\xi_1,\xi_2})\right\},
\end{equation*}
where $ \Sigma({\xi_1,\xi_2})$ denotes the set of
$\beta\in\NN_{E\times E}$ having marginals $\xi_1$ and $\xi_2.$  First remark
that when $\xi_{1}(E)$ is finite,   the cost is finite only if
$\xi_{1}(E)=\xi_{2}(E)$, otherwise $\Sigma({\xi_1,\xi_2})$ is empty and then, by
convention, the
cost is infinite. Moreover,  the cost is attained at  the permutation of
$\{1,\cdots, \xi_{1}(E)\}$ which minimizes the sum of the squared distances:
\begin{equation*}
  c(\xi_1,\xi_2)=\frac12\min_{\sigma\in \mathfrak S_{\xi_{1}(E)}} \sum_{j=1}^{\xi_{1}(E)}\|x_{i}-y_{\sigma(i)}\|^{2}
\end{equation*}
where $\xi_{1}=\{x_{j},1\le j\le \xi_{1}(E)\}$ and $\xi_{2}=\{y_{j},1\le j\le \xi_{1}(E)\}$.
For infinite configurations, it is not immediate that the cost function so defined
has the minimum regularity required to consider an optimal transport problem.
According to \cite{RocknerRademachertheoremconfiguration1999}, this is indeed
true as  $c$ is lower semi continuous on
$\NN_{E}\times\NN_E$.
We can then consider the Monge-Kantorovitch
problem $\MKP(\mu,\nu,c)$ on $\proba(\NN_{E})$.
The main theorem of \cite{DecreusefondWassersteindistanceconfigurations2008} is
the following. For $\Lambda$ a compact subset of $E$, by definition of locally
finite point process, the number of points of
$\xi_{|\Lambda}$ is finite hence we can write
\begin{equation*}
  \NN_{\Lambda }=\bigcup_{n=0}^{\infty} \NN_{\Lambda}^{(n)}
\end{equation*}
where
\begin{equation*}
  \NN_{\Lambda}^{(n)}=\left\{ \xi\in \NN_{\Lambda},\ \xi(\Lambda)=n \right\}.
\end{equation*}
\begin{definition}
A probability measure $\mu$ on $\NN_{E}$ is said to be regular whenever it
admits Janossy densities of any order.
\end{definition}

\begin{theorem}
  \label{thm:transport_fpp}
  Let $\Lambda\subset E$ a compact set.  Let $\mu$  be a regular probability measure on $\NN_\Lambda$ and
  $\nu$ be a probability measure on $\NN_E$. The Monge-Kantorovitch
  distance, associated to $c$, between $\mu$ and $\nu$ is finite if
  and only if the following two conditions hold
  \begin{enumerate}
  \item \label{item:4} $\mu(\zeta(\Lambda)=n)=\nu(\xi(E)=n):=\c_{n}$ for any integer
  $n\ge 0$,
\item  \label{item:5} $\sum_{n\ge 1}\c_{n}\,\T_{e}(\mu_n,\, \nu_n)$ is finite.
  \end{enumerate}
Then, the solution of $\MKP(\mu,\nu,c)$ is attained at a unique point $\gamma_{\opt}$ and there
exists a unique map
\begin{align*}
  \varphi\, :\, \bigcup_{n=0}^{\infty} \NN_{\Lambda}^{(n)}&\longrightarrow \bigcup_{n=0}^{\infty} \NN_{E}^{(n)}\\
 \xi= \{x_{1},\cdots,x_{n}\}&\longmapsto \{\varphi_{n}\bigl(y;\,(x_{1},\cdots,x_{n})\bigr),\ y\in\xi \}\in E^{n}
\end{align*}
such that for $\xi\in \NN_{\Lambda}^{(n)},$
\begin{equation*}
  \zeta=\sum_{x\in \xi} \varepsilon_{\varphi_{n}(x,\, \xi)}, \gamma_{\opt}\text{-a.s.}
\end{equation*}
Moreover,
\begin{equation}
\label{eq:8}
  \T_c(\mu,\, \nu)=\sum_{n\ge 1}\c_{n}\T_{e}(\mu_n,\, \nu_n).
\end{equation}
If $\nu$ is regular, then for $y\in \{x_{1},\cdots,x_{n}\}$,
\begin{equation*}
  \varphi_{n}(y,(x_{1},\cdots,x_{n}))=y-\nabla_{y} \psi_{n}(x_{1},\cdots,x_{n})
\end{equation*}
where $\Id-\nabla \psi_{n}$ is the optimal transportation map between the
$n$-th Janossy normalized  densities  $\c_{n}^{-1}\,j_{n}^{\mu}$ and $\c_{n}^{-1}\,j_{n}^{\nu}$.
\end{theorem}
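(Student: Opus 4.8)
The plan is to reduce the Monge--Kantorovitch problem on the configuration space $\NN_E$ to a countable family of finite-dimensional problems on $E^n$, one for each possible common cardinality $n$, and then invoke Brenier's theorem (Theorem~\ref{thm:optimal-transport}) on each stratum. First I would note that the cost $c(\xi_1,\xi_2)$ is infinite unless $\xi_1(\Lambda)=\xi_2(E)$, since $\mu$ lives on $\NN_\Lambda$; hence any $\gamma\in\Sigma(\mu,\nu)$ of finite total cost must be supported on $\bigcup_n \NN_\Lambda^{(n)}\times\NN_E^{(n)}$, which forces the mass-matching condition \eqref{item:4}, $\mu(\zeta(\Lambda)=n)=\nu(\xi(E)=n)=:\c_n$. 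Writing $\mu=\sum_n\c_n\mu_n$ and $\nu=\sum_n\c_n\nu_n$ with $\mu_n,\nu_n$ the normalized restrictions to the $n$-point strata (identified with symmetric probability measures on $E^n$, equivalently on $E^n/\mathfrak S_n$), any admissible $\gamma$ disintegrates as $\gamma=\sum_n\c_n\gamma_n$ with $\gamma_n\in\Sigma(\mu_n,\nu_n)$, and its cost is $\sum_n\c_n\int c\,\dif\gamma_n$. Because on $\NN_\Lambda^{(n)}\times\NN_E^{(n)}$ the lifted cost $c$ coincides with $\tfrac12$ times the optimal-matching (symmetrized quadratic) cost between the unordered $n$-tuples, minimizing $\int c\,\dif\gamma_n$ over $\Sigma(\mu_n,\nu_n)$ is exactly $\T_e(\mu_n,\nu_n)$ computed on $E^n/\mathfrak S_n$. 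Minimizing term by term gives \eqref{eq:8} and shows finiteness is equivalent to \eqref{item:5}.

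For existence and uniqueness of the optimal $\gamma_{\opt}$ and the Monge map, I would apply Brenier's theorem on each stratum: since $\mu$ is regular, $\mu_n$ has a density (the normalized Janossy density $\c_n^{-1}j_n^\mu$) with respect to $m^{\otimes n}$, hence — assuming, as the statement implicitly does through the hypothesis that $m=h\,\dif x$ is absolutely continuous — $\mu_n\ll$ Lebesgue on $E^n=(\R^k)^n$. Condition \eqref{item:5} guarantees each $\mu_n,\nu_n$ has finite second moment (for those $n$ with $\c_n>0$), so Theorem~\ref{thm:optimal-transport} yields a unique optimal coupling $\gamma_{n,\opt}$ concentrated on the graph of $\Id-\nabla\psi_n$ for a unique (up to constants) convex-gradient potential $\psi_n$ on $(\R^k)^n$. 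Here one must be careful that the transport problem is between measures on the quotient $E^n/\mathfrak S_n$; the standard device is to work with the symmetric representatives and observe that, since $\mu_n$ and $\nu_n$ are exchangeable, the Brenier potential can be chosen $\mathfrak S_n$-invariant, so $\nabla\psi_n$ is $\mathfrak S_n$-equivariant and descends to a well-defined map on configurations, giving the stated form $\varphi_n(y,\xi)=y-\nabla_y\psi_n(x_1,\dots,x_n)$. Gluing the $\gamma_{n,\opt}$ with weights $\c_n$ produces $\gamma_{\opt}$, and uniqueness on each stratum together with the forced disintegration gives global uniqueness.

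The main obstacle I expect is twofold and both parts are somewhat technical rather than deep: first, the measurability and lower-semicontinuity bookkeeping needed to justify the disintegration $\gamma=\sum_n\c_n\gamma_n$ and to pass the infimum inside the sum — this is where one needs that $c$ is l.s.c.\ on $\NN_E\times\NN_E$ (the cited result of \cite{RocknerRademachertheoremconfiguration1999}) and the general existence theory for $\MKP$ on Polish spaces quoted in Section~\ref{sec:optimal}; second, the quotient subtlety: verifying that the Brenier map between symmetric measures on $(\R^k)^n$ really is the optimal-matching map between point configurations, i.e.\ that optimizing over couplings of the symmetrized measures is the same as optimizing the discrete assignment cost $\tfrac12\min_{\sigma}\sum_j\|x_i-y_{\sigma(i)}\|^2$ pointwise. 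I would handle the latter by the usual symmetrization trick (averaging a potential over $\mathfrak S_n$ preserves convexity after composing with the sum-of-coordinates structure, or more simply: the cost $\tfrac12\sum_j\|x_j-y_j\|^2$ on $(\R^k)^n$ is $\mathfrak S_n$-invariant under the diagonal action, its optimal coupling between exchangeable marginals is unique by Brenier, hence automatically $\mathfrak S_n$-symmetric, hence factors through the quotient and there equals the matching cost). Everything else — identifying the Janossy densities from \eqref{eq:4} and \eqref{eq_preliminaries:1} as the densities of $\mu_n$, checking the second-moment condition — is routine once the structure above is in place.
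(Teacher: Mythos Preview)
The paper does not actually prove Theorem~\ref{thm:transport_fpp}: it is stated as ``the main theorem of \cite{DecreusefondWassersteindistanceconfigurations2008}'' and quoted without proof, so there is no in-paper argument to compare your proposal against. Your outline is nonetheless a faithful reconstruction of how such a result is obtained in that reference: stratify by cardinality, use the fact that $c=+\infty$ off the diagonal strata to force the mass-matching condition and the disintegration $\gamma=\sum_n\c_n\gamma_n$, reduce each stratum to a Euclidean $\MKP$ on $(\R^k)^n$, and invoke Brenier's theorem via the regularity (Janossy-density) hypothesis on $\mu$.

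Two small remarks on your sketch. First, the finite-second-moment check is simpler than you make it: $\mu_n$ is supported on the compact $\Lambda^n$, hence automatically has finite second moment; finiteness of $\T_e(\mu_n,\nu_n)$ then forces the same for $\nu_n$. Second, your ``quotient subtlety'' is the genuine content, and your parenthetical argument can be sharpened: for any coupling $\gamma$ on $E^n\times E^n$ with symmetric marginals, a measurable selection of the optimal permutation $\sigma^*(x,y)$ lets you push $\gamma$ forward by $(x,y)\mapsto(x,\sigma^*(x,y)\cdot y)$ to a new coupling whose Euclidean cost equals the matching cost of $\gamma$; conversely, the matching cost never exceeds the Euclidean cost. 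This gives $\T_e(\mu_n,\nu_n)$ on $E^n$ equal to the configuration-level $\T_c$ restricted to the $n$-th stratum, which is what \eqref{eq:8} expresses. The $\mathfrak S_n$-equivariance of the Brenier map then follows from uniqueness, exactly as you say.
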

This means that whenever the distance between $\mu$ and $\nu$ is finite, there
exists a strong coupling which works as follows: 1) draw a discrete random
variable with the distribution of $\xi(\Lambda)$, let $\iota$ the obtained value 2) draw the points of $\xi$
according to $\mu_{\iota}$ and then 3) apply the map $\varphi_{\iota}(.,\, \xi)$ to each
point of $\xi$. The configuration which is obtained is distributed according to $\nu_{\iota}$.

It is shown in \cite{DecreusefondWassersteindistanceconfigurations2008} that for
two Poisson point processes of respective intensity $\sigma_{1}$ and
$\sigma_{2}$, the distance defined above is finite if and only if
$\sigma_{1}(E)=\sigma_{2}$ and
\begin{equation*}
   \zeta= \sum_{x\in \xi}\varepsilon_{t(x)}, \ \gamma_{\opt} \text{-a.s.}
\end{equation*}
where $t$ is the optimal transport map between $\sigma_{1}$ and $\sigma_2$ for
the Euclidean cost as defined in Theorem \ref{thm:optimal-transport}. Note that
the optimal map is a transformation which is applied to each atom irrespectively
of the others. In full generality, for non Poisson processes,  the amount by which an atom is moved depends on the
other locations.

\subsection{Distances between DPP}
\label{sec:distancesDPP}

For determinantal point processes, we can evaluate the effect of a modification
of the eigenvalues with the Kantorovitch-Rubinstein distance and the effect of a
modification of the eigenvectors with the Wasserstein-2 distance.
\begin{lemma}
  \label{thm_transport:distanceBetweenProjectionKernel}
  Let $\mu$ and $\nu$ two determinantal point processes with respective kernels
  $K_{\mu}$ and $K_{\nu}$. Assume that $K_{\mu}$ and $K_{\nu}$ are two
  projection kernels in some Hilbert space $L^{2}(m)$ such that $K_{\mu}= K_{\nu}+L $ where $L$ is another
  projection kernel and $L$ is orthogonal to $K_{\nu}$. Then,
  \begin{equation}
    \label{eq:distanceProjective}
    \T_{\KR}(\mu,\,\nu)\le \text{range}(L).
  \end{equation}
\end{lemma}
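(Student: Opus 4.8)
The plan is to exhibit an explicit coupling of $\mu$ and $\nu$ whose expected total-variation cost is at most $\operatorname{range}(L)$, and then invoke the variational characterization~\eqref{eq:1} of $\T_{\KR}$ (or rather its primal/infimum form). Since $K_\mu$, $K_\nu$ and $L$ are projection kernels with $K_\mu = K_\nu + L$ and $L \perp K_\nu$, the ranges of $K_\nu$ and $L$ are orthogonal subspaces of $L^2(m)$ whose direct sum is the range of $K_\mu$; write $M=\operatorname{range}(K_\nu)$, $r=\operatorname{range}(L)$, so $\operatorname{range}(K_\mu)=M+r$. Choose an orthonormal basis $(\phi_1,\dots,\phi_M)$ of $\operatorname{range}(K_\nu)$ and extend it by $(\phi_{M+1},\dots,\phi_{M+r})$, an orthonormal basis of $\operatorname{range}(L)$; together these form an orthonormal family spanning $\operatorname{range}(K_\mu)$.

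The coupling is built from Lemma~\ref{lem:eigenvalues} and Algorithm~\ref{algo:sampling}. For a projection DPP with $N$ orthonormal functions, the set $I$ is deterministic (all Bernoulli parameters equal $1$), so a realization of $\mu$ is obtained by running Algorithm~\ref{algo:sampling} with the $M+r$ functions $\phi_1,\dots,\phi_{M+r}$, and a realization of $\nu$ by running it with the $M$ functions $\phi_1,\dots,\phi_M$. The idea is to run these two algorithms \emph{on the same source of randomness} in such a way that the first $M$ points drawn agree: concretely, I would first draw $W_1,\dots,W_M$ using the $\nu$-algorithm (with the functions $\phi_1,\dots,\phi_M$), which produces a configuration $\zeta=\{W_1,\dots,W_M\}$ distributed as $\nu$; then, to produce a configuration distributed as $\mu$, I would check that there is a valid way to extend this to an $M+r$ point draw from Algorithm~\ref{algo:sampling} applied to $\phi_1,\dots,\phi_{M+r}$ such that the first $M$ coordinates can be taken to be exactly $W_1,\dots,W_M$ in some order. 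This requires the key observation that the law of the first $k$ points in Algorithm~\ref{algo:sampling} for a projection DPP of rank $N$, viewed as an unordered set, is the determinantal-type marginal $\frac{1}{N(N-1)\cdots(N-k+1)}\det(K(x_i,x_j))_{i,j\le k}$, which does \emph{not} depend on which further functions are appended beyond $\phi_1,\dots,\phi_k$ — because that determinant only involves $\phi_1,\dots,\phi_k$. Hence the unordered set $\{W_1,\dots,W_M\}$ has the same conditional law under both runs, so I can couple the two runs to share it, then draw the extra $r$ points $W_{M+1},\dots,W_{M+r}$ from the appropriate conditional density $p_{M+1},\dots,p_{M+r}$ of Algorithm~\ref{algo:sampling}. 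The resulting $\xi=\{W_1,\dots,W_{M+r}\}$ is then genuinely $\mu$-distributed.

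Granting this, $\xi\supset\zeta$ almost surely and $(\xi\Delta\zeta)(E)=|\xi|-|\zeta|=r=\operatorname{range}(L)$ deterministically, so $\esp{(\xi\Delta\zeta)(E)}=\operatorname{range}(L)$, and taking the infimum over couplings in the definition of $\T_{\KR}$ gives the claimed bound. The step I expect to be the main obstacle — and the one needing the most care — is justifying that the unordered-set marginal on the first $M$ points is insensitive to the appended functions, and more precisely that one can actually realize a single coupling in which the $\nu$-run's output equals the first $M$ points of a valid $\mu$-run: one must argue either via the exchangeability/symmetry of $p_I$ and the chain-rule structure of Algorithm~\ref{algo:sampling} (the conditional density $p_i$ at step $i$ depends on $\phi_1,\dots,\phi_N$ only through $\|\phi_I(x)\|^2$ and the Gram-Schmidt vectors $e_1,\dots,e_{i-1}$, and for $i\le M$ the Gram-Schmidt vectors stay inside $\operatorname{span}(\phi_1,\dots,\phi_M)$ so long as $W_1,\dots,W_{i-1}$ were drawn from the $\phi_1,\dots,\phi_M$ dynamics), or directly via the Janossy/ correlation formula for projection DPPs. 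Once that compatibility is established the rest is immediate.
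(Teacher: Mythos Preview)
Your overall strategy---build an explicit coupling with $\zeta\subset\xi$ and then use the infimum characterization of $\T_{\KR}$---is exactly the right shape, and it matches the paper's approach. But the specific coupling you propose does not work, and the step you flag as ``the main obstacle'' is in fact false, not merely delicate.

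Concretely, you claim that the unordered-set marginal of the first $M$ points produced by Algorithm~\ref{algo:sampling} applied to $\phi_1,\dots,\phi_{M+r}$ coincides with the law of the $M$ points of $\nu$. It does not. Already at step~$1$, the density of $W_1$ in the $\mu$-run is
\[
\frac{1}{M+r}\sum_{j=1}^{M+r}|\phi_j(x)|^2,
\]
whereas in the $\nu$-run it is $\frac{1}{M}\sum_{j=1}^{M}|\phi_j(x)|^2$; these differ as soon as $r\ge 1$ and $\phi_{M+1},\dots,\phi_{M+r}$ are nonzero. More generally, integrating out $r$ variables from $p_{\{1,\dots,M+r\}}$ gives the marginal
\[
\frac{r!}{(M+r)!}\det\bigl(K_\mu(x_i,x_j)\bigr)_{1\le i,j\le M},
\]
which involves the \emph{full} kernel $K_\mu=\sum_{j=1}^{M+r}\phi_j\otimes\overline{\phi_j}$, not just $\phi_1,\dots,\phi_M$. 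Your assertion that ``that determinant only involves $\phi_1,\dots,\phi_k$'' confuses the index range of the matrix with the functions entering each entry $K_\mu(x_i,x_j)$. The parenthetical argument about Gram--Schmidt vectors staying in $\operatorname{span}(\phi_1,\dots,\phi_M)$ has the same flaw: the density $p_i$ contains the term $\|\phi_I(x)\|^2$, which already depends on all $M+r$ functions.

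What is true---and what the paper uses---is that $K_\nu\prec K_\mu$ in the Loewner order implies the existence of \emph{some} coupling $(\xi',\zeta')$ with $\zeta'\subset\xi'$ almost surely; this is a nontrivial stochastic-domination result for DPPs, quoted from \cite{goldman_palm_2010}. Once you have that coupling, $(\xi'\Delta\zeta')(E)=\xi'(E)-\zeta'(E)=(M+r)-M=r$ deterministically, and the bound follows exactly as you wrote. So the missing ingredient is not a computation internal to Algorithm~\ref{algo:sampling} but an external domination theorem; your construction would need to be replaced by a citation to that result (or a genuine proof of it, which is substantially harder than what you sketched).
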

\begin{proof}
  The hypothesis means that there exists $(\phi_{j},\, j=1,\cdots,l+n)$ a family
  of orthonormal functions in $L^{2}(m)$ such that
  \begin{equation*}
    K_{\nu}(x,y)=\sum_{j=1}^{n} \phi_{j}(x)\phi_{j}(y) \text{ and } L(x,y)=\sum_{j=n+1}^{l} \phi_{j}(x)\phi_{j}(y).
  \end{equation*}
  Since $L$ is a positive symmetric operator, this exactly  means that $K_{\nu}\prec
  K_{\mu}$ in the Loewner sense. According to \cite{goldman_palm_2010}, there
  exists $\xi',\, \zeta'$  of respective distribution $\mu$, $\nu$
  and a point process $\omega'$ such that
  \begin{equation*}
    \xi'=\zeta'+\omega' \text{ and } \zeta'\cap \omega'=\emptyset.
  \end{equation*}
  This implies that
  \begin{equation*}
    \xi'\Delta\zeta'(E)=\omega'(E)=l.
  \end{equation*}
According to the first definition of $\T_{\KR}$, see  \eqref{eq:1}, this implies \eqref{eq:distanceProjective}.
\end{proof}

\begin{theorem}\label{thm:distanceTV}
   Let $\mu$ (respectively $\nu$) be a determinantal point process of
  characteristics $K_{\mu}$ and $h_{\mu}$ (respectively $K_{\nu}$ and
  $h_{\nu}$) on a compact set $\Lambda\subset \R^{k}$. Denote by
  $(\lambda_{n}^{\mu},\, n\ge 0)$ (respectively $(\lambda_{n}^{\nu},\, n\ge 0)$)
  the eigenvalues of $K_{\mu}$ in $L^{2}(E,\,h_{\mu}\dif x)$ (respectively of
  $K_{\nu}$ in $L^{2}(E,\,h_{\nu}\dif x)$) counted with multiplicity and ranked in
  decreasing order. Assume that
  \begin{equation*}
    \lambda_{n}^{\nu} \le \lambda_{n}^{\mu} , \ \forall n\ge 0.
  \end{equation*}
  Then,
  \begin{equation}
    \label{eq:distanceTV}
    \T_{\KR}(\mu,\, \nu)\le \sum_{n=0}^{\infty} |\lambda_{n}^{\mu}-\lambda_{n}^{\nu}|.
  \end{equation}
\end{theorem}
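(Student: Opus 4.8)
The plan is to combine the simulation identity of Lemma~\ref{lem:eigenvalues} with a monotone coupling of the Bernoulli variables and with the projection‑kernel coupling of Lemma~\ref{thm_transport:distanceBetweenProjectionKernel}, and then to feed the resulting coupling into the primal formula \eqref{eq:1}. Since the statement concerns a pure modification of the eigenvalues, I read it with $K_\mu$ and $K_\nu$ sharing a common orthonormal basis of eigenfunctions $(\phi_n,\, n\ge 0)$, so that $K_\mu=\sum_n\lambda^\mu_n\,\phi_n\otimes\phi_n$ and $K_\nu=\sum_n\lambda^\nu_n\,\phi_n\otimes\phi_n$ with $0\le\lambda^\nu_n\le\lambda^\mu_n\le 1$ (after the unitary $f\mapsto\sqrt h\,f$ one may take both reference measures to be Lebesgue measure on $\Lambda$); this shared‑eigenbasis reading is essential, since two projection DPPs carried by different one‑dimensional eigenspaces have point masses of different laws that no coupling can make coincide, so the bound is really specific to the situation where only the eigenvalues move. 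First I would draw i.i.d.\ uniform variables $(U_n,\, n\ge 0)$ on $[0,1]$ and set $B^\mu_n=\car_{\{U_n\le\lambda^\mu_n\}}$, $B^\nu_n=\car_{\{U_n\le\lambda^\nu_n\}}$, so that $B^\nu_n\le B^\mu_n$ and $\P(B^\mu_n=1,\ B^\nu_n=0)=\lambda^\mu_n-\lambda^\nu_n$; with $I_\mu=\{n:B^\mu_n=1\}$ and $I_\nu=\{n:B^\nu_n=1\}$ we get $I_\nu\subset I_\mu$, both finite almost surely because $\sum_n\lambda^\mu_n=\trace K_\mu<\infty$.

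Conditionally on $(I_\mu,I_\nu)$, the projection kernels $K_{I_\mu}=\sum_{n\in I_\mu}\phi_n\otimes\phi_n$ and $K_{I_\nu}=\sum_{n\in I_\nu}\phi_n\otimes\phi_n$ are finite rank and satisfy $K_{I_\mu}=K_{I_\nu}+L$ with $L=\sum_{n\in I_\mu\setminus I_\nu}\phi_n\otimes\phi_n$ a projection kernel orthogonal to $K_{I_\nu}$; by the coupling built in the proof of Lemma~\ref{thm_transport:distanceBetweenProjectionKernel} (via \cite{goldman_palm_2010}) there are, on one probability space, a configuration $\xi$ distributed as the projection DPP of kernel $K_{I_\mu}$, a configuration $\zeta$ distributed as the projection DPP of kernel $K_{I_\nu}$ and a configuration $\omega$ with $\xi=\zeta+\omega$, $\zeta\cap\omega=\emptyset$ and $\omega(E)=\text{range}(L)=|I_\mu\setminus I_\nu|$, whence $(\xi\Delta\zeta)(E)=|I_\mu\setminus I_\nu|$. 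Since the set of admissible pairs $(I_\mu,I_\nu)$ is countable there is no measurability obstruction in selecting these conditional couplings, and gluing them to the randomness of $(U_n)$ produces a probability measure $\gamma$ on $\NN_E\times\NN_E$; integrating first over $I_\nu$ given $I_\mu$ and then over $I_\mu$, Lemma~\ref{lem:eigenvalues} shows that the first marginal of $\gamma$ is $\mu$, and symmetrically that its second marginal is $\nu$.

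It then remains to evaluate the cost of $\gamma$. Using it as a competitor in \eqref{eq:1} together with the conditional identity above,
\begin{equation*}
  \T_{\KR}(\mu,\nu)\ \le\ \E_\gamma\!\bigl[(\xi\Delta\zeta)(E)\bigr]\ =\ \E\bigl[\,|I_\mu\setminus I_\nu|\,\bigr]\ =\ \sum_{n\ge 0}\P(B^\mu_n=1,\ B^\nu_n=0)\ =\ \sum_{n\ge 0}(\lambda^\mu_n-\lambda^\nu_n),
\end{equation*}
which is \eqref{eq:distanceTV} because $\lambda^\nu_n\le\lambda^\mu_n$, the sum being finite as it is dominated by $\trace K_\mu$. (In fact this bound is sharp: plugging the $1$‑Lipschitz functional $\xi\mapsto\xi(\Lambda)$ into the dual form of \eqref{eq:1} produces the same quantity as a lower bound.) The one genuinely delicate point is the reduction to a common eigenbasis discussed above; granted that, the proof is just the monotone Bernoulli coupling underlying Lemma~\ref{lem:eigenvalues} superposed with the orthogonal‑projection coupling of Lemma~\ref{thm_transport:distanceBetweenProjectionKernel}, and everything else is bookkeeping.
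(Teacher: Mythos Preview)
Your argument is correct and follows essentially the same route as the paper's own proof: the monotone Bernoulli coupling via common uniforms, the inclusion $I_\nu\subset I_\mu$, the application of Lemma~\ref{thm_transport:distanceBetweenProjectionKernel} conditionally on $(I_\mu,I_\nu)$, and the final expectation are exactly the steps carried out there. You are in fact more explicit than the paper on two points it leaves tacit --- the reduction to a common eigenbasis (the paper's proof silently writes $K_{I_\mu}$ and $K_{I_\nu}$ with the same $\phi_n$'s) and the measurable gluing of the conditional couplings --- and your closing sharpness remark is an extra observation not in the paper.
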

\begin{proof}
We make a coupling of $(B(\lambda_{n}^{\mu}),\, n\ge 0)$ and
$(B(\lambda_{n}^{\mu}),\, n\ge 0)$ by using the same sequence of uniform random
variables: Let $(U_{n},\, n\ge 0)$ be a sequence of independent, identically uniformly distributed
over $[0,1]$, random variables, consider
\begin{equation*}
  X^{\mu}_{n}=\car_{\{U_{n}\le \lambda_{n}^{\mu}\}} \text{ and } X_{n}^{\nu}=\car_{\{U_{n}\le \lambda_{n}^{\nu}\}}.
\end{equation*}
Note that
\begin{equation}\label{eq_transport:2}
  \P(X_{n}^{\nu} \neq X_{n}^{\mu})= |\lambda_{n}^{\mu}-\lambda_{n}^{\nu}|.
\end{equation}
Let $I_{\mu}=\{n\ge 0, X_{n}^{\mu}=1\}$ and $I_{\nu}=\{n\ge 0, X_{n}^{\nu}=1\}$.
In view of the hypothesis, $X^{\nu}\le X^{\mu}$ hence $I_{\nu}\subset I_{\mu}$.
Otherwise stated, $K_{I_{\mu}}$ and $K_{I_{\nu}}$ are two projection kernels
which satisfy the hypothesis of
Lemma~\ref{thm_transport:distanceBetweenProjectionKernel}. Hence, there exists a realization $(\xi,\zeta)$ of $\Sigma(\mu,\nu)$ given $I_{\mu}$
and $I_{\nu}$,  such that
\begin{equation*}
\dist_{\TV}(\xi,\, \zeta)={\sum_{n=0}^{\infty }\car_{\{X_{n}^{\nu} \neq X_{n}^{\mu}\}}}.
\end{equation*}
Gluing these realizations together, we get a coupling  $(\xi,\zeta)$ such that
\begin{align*}
  \esp{\dist_{\TV}(\xi,\, \zeta)}&=\esp{\esp{\dist_{\TV}(\xi,\, \zeta)\,|\,I_{\mu},I_{\nu} }}\\
  &                                =\esp{\sum_{n=0}^{\infty }\car_{\{X_{n}^{\nu} \neq X_{n}^{\mu}\}}}\\
   &=\sum_{n=0}^{\infty } |\lambda_{n}^{\mu}-\lambda_{n}^{\nu}|,
\end{align*}
according to~\eqref{eq_transport:2}. Since the Kantorovitch-Rubinstein distance
is obtained as the infimum over all couplings of the total variation distance
between $\xi$ and $\zeta$,
this particular construction shows that \eqref{eq:distanceTV} holds.
\end{proof}
The next corollary is an immediate consequence of the alternative definition of
the KR distance on point processes, see Eqn. \eqref{eq:1}.
\begin{corollary}
  With the hypothesis of Theorem~\ref{thm:distanceTV}, let $\xi$ and $\zeta$ be
  random point process of respective distribution $\mu$ and $\nu$. Then, we have that
  \begin{equation*}
    \sup_{A\subset\Lambda} \dist_{\TV}(\xi(A),\, \zeta(A))\le \sum_{n=0}^{\infty} |\lambda_{n}^{\mu}-\lambda_{n}^{\nu}|.
  \end{equation*}
\end{corollary}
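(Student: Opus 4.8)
The estimate will be read off directly from the dual form of the Kantorovitch--Rubinstein distance in \eqref{eq:1}, the key point being that the cost $\dist_{\TV}$ on configurations is $\N$-valued, hence jumps by at least $1$ whenever it is nonzero. Fix a Borel set $A\subset\Lambda$ and an arbitrary subset $B\subset\N$, and put $f_{A,B}(\xi)=\car_{\{\xi(A)\in B\}}$, a bounded measurable map from $\NN_{E}$ to $\{0,1\}$. I claim that $f_{A,B}\in\Lip_{1}(\NN_{E})$ for $\dist_{\TV}$: if $\dist_{\TV}(\xi,\zeta)=(\xi\Delta\zeta)(E)=0$ then $\xi=\zeta$ and $f_{A,B}(\xi)=f_{A,B}(\zeta)$, while otherwise $(\xi\Delta\zeta)(E)\ge 1\ge|f_{A,B}(\xi)-f_{A,B}(\zeta)|$; in both cases the Lipschitz inequality holds.

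Plugging $f_{A,B}$ into the supremum in \eqref{eq:1} then yields
\[
  \P(\xi(A)\in B)-\P(\zeta(A)\in B)=\int_{\NN_{E}}f_{A,B}\dif\mu-\int_{\NN_{E}}f_{A,B}\dif\nu\le\T_{\KR}(\mu,\nu).
\]
Exchanging the roles of $\mu$ and $\nu$ gives the reverse inequality, so $|\P(\xi(A)\in B)-\P(\zeta(A)\in B)|\le\T_{\KR}(\mu,\nu)$ for every $B$; taking the supremum over $B\subset\N$ we obtain $\dist_{\TV}(\distrib(\xi(A)),\distrib(\zeta(A)))\le\T_{\KR}(\mu,\nu)$, and then the supremum over $A\subset\Lambda$ together with the bound $\T_{\KR}(\mu,\nu)\le\sum_{n\ge 0}|\lambda_{n}^{\mu}-\lambda_{n}^{\nu}|$ of Theorem~\ref{thm:distanceTV} finishes the proof.

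Equivalently, one can argue through the explicit coupling $(\xi,\zeta)$ built in the proof of Theorem~\ref{thm:distanceTV}, for which $\esp{(\xi\Delta\zeta)(E)}\le\sum_{n\ge 0}|\lambda_{n}^{\mu}-\lambda_{n}^{\nu}|$: since $\dist_{\TV}(\distrib(X),\distrib(Y))\le\P(X\ne Y)$ for any coupling of two $\N$-valued variables, and since $\xi(A)\ne\zeta(A)$ forces $(\xi\Delta\zeta)(A)\ge 1$, one gets $\dist_{\TV}(\distrib(\xi(A)),\distrib(\zeta(A)))\le\P(\xi(A)\ne\zeta(A))\le\esp{(\xi\Delta\zeta)(A)}\le\esp{(\xi\Delta\zeta)(E)}$. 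There is no genuine obstacle in either route: the only thing to observe is that the integer-valuedness of $\dist_{\TV}$ automatically upgrades the first-moment control of the symmetric difference into a total-variation bound for the counting statistics, in the same spirit as the remark that the counting maps $\Xi_{\Lambda}$ are $1$-Lipschitz.
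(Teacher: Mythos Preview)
Your argument is correct and follows exactly the route the paper indicates: the corollary is stated as ``an immediate consequence of the alternative definition of the KR distance on point processes, see Eqn.~\eqref{eq:1}'', and you carry this out by exhibiting $f_{A,B}=\car_{\{\xi(A)\in B\}}$ as a $\dist_{\TV}$--Lipschitz test function and then invoking Theorem~\ref{thm:distanceTV}. Your alternative coupling argument is also fine and is in the same spirit.
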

This means that the Kantorovitch-Rubinstein distance between point processes
focuses on the number of atoms in any compact. As we shall see now, the Wasserstein-2 distance
evaluates the matching distance between configurations  when they have the same cardinality.
\begin{theorem}
  \label{thm:egalite_vp}
  Let $\mu$ (respectively $\nu$) be a determinantal point process of
  characteristics $K_{\mu}$ and $h_{\mu}$ (respectively $K_{\nu}$ and
  $h_{\nu}$) on a compact set $\Lambda\subset \R^{k}$. The Wasserstein-2 distance between $\mu$ and $\nu$ is finite if
  and only if
  \begin{equation*}
    \text{sp}(K_{\mu};\, L^{2}(h_{\mu}\!\dif x))= \text{sp}(K_{\nu};\, L^{2}(h_{\nu}\!\dif x)).
  \end{equation*}
\end{theorem}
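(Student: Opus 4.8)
The plan is to invoke Theorem~\ref{thm:transport_fpp} with $E=\R^k$ and the compact set $\Lambda$, which reduces the finiteness of the Wasserstein-2 distance to two conditions: (i) the point-count distributions $\mu(\xi(\Lambda)=n)$ and $\nu(\xi(\Lambda)=n)$ coincide for every $n$, and (ii) the series $\sum_{n\ge 1}\c_n\,\T_e(\mu_n,\nu_n)$ converges. Both processes are determinantal on a compact set with trace-class kernels, hence regular (they admit Janossy densities of all orders via \eqref{eq:4}), so Theorem~\ref{thm:transport_fpp} applies. The strategy is then to show that condition~(i) alone already forces the two spectra to be equal, and conversely that equality of spectra makes \emph{both} conditions hold, in fact making the distance zero.

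For the forward direction, recall from Lemma~\ref{lem:eigenvalues} that the number of points of a DPP with eigenvalues $(\lambda_n^\mu)$ is distributed as $\sum_n B(\lambda_n^\mu)$, a sum of independent Bernoulli variables. So condition~(i) says $\sum_n B(\lambda_n^\mu)$ and $\sum_n B(\lambda_n^\nu)$ have the same law. The key step is the uniqueness of such a representation: a (convergent) sum of independent Bernoullis determines its parameter multiset. I would argue this via generating functions — the probability generating function of $\sum_n B(\lambda_n)$ is the infinite product $\prod_n(1-\lambda_n+\lambda_n z)=\prod_n(1-\lambda_n(1-z))$, which is an entire function of $z$ whose zeros (counted with multiplicity) are exactly $\{1+\lambda_n/(1-\lambda_n)\}$ when $\lambda_n<1$, together with the factors $z$ coming from any $\lambda_n=1$. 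Equality of the two generating functions thus forces equality of the zero sets with multiplicity, hence $\{\lambda_n^\mu\}=\{\lambda_n^\nu\}$ as multisets; since both are ranked in decreasing order this is $\sp(K_\mu;L^2(h_\mu\dif x))=\sp(K_\nu;L^2(h_\nu\dif x))$.

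For the converse, assume the spectra coincide. Write $K_\mu(x,y)=\sum_i\lambda_i\phi_i(x)\phi_i(y)$ in $L^2(h_\mu\dif x)$ and $K_\nu(x,y)=\sum_i\lambda_i\psi_i(x)\psi_i(y)$ in $L^2(h_\nu\dif x)$ with the same $\lambda_i$. Run the Hough et al. construction of Lemma~\ref{lem:eigenvalues} for both processes driven by the \emph{same} family of Bernoulli variables $B(\lambda_n)$: one gets the same random index set $I$, the same random cardinality $|I|$, hence $\c_n^\mu=\c_n^\nu$, giving condition~(i). Given $I$, $\mu$ places $|I|$ points with density proportional to $\det(K_I^\mu)$ and $\nu$ with density proportional to $\det(K_I^\nu)$, where $K_I^\mu$ and $K_I^\nu$ are the projection kernels onto $\vect(\phi_i,i\in I)$ in $L^2(h_\mu\dif x)$ and onto $\vect(\psi_i,i\in I)$ in $L^2(h_\nu\dif x)$ respectively. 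For condition~(ii) it suffices that each $\T_e(\mu_n,\nu_n)$ is finite, which holds automatically because these are probability measures on the compact $\Lambda^n$, so both have finite second moment; then $\sum_n\c_n\T_e(\mu_n,\nu_n)\le (\operatorname{diam}\Lambda)^2 \sum_n n\,\c_n = (\operatorname{diam}\Lambda)^2\,\E[\xi(\Lambda)]<\infty$ since the expected number of points is $\sum_i\lambda_i<\infty$. Hence the Wasserstein-2 distance is finite.

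The main obstacle is making the generating-function argument fully rigorous when infinitely many $\lambda_n$ are nonzero: one must check that the infinite product $\prod_n(1-\lambda_n(1-z))$ converges to an entire function (it does, since $\sum_n\lambda_n<\infty$, so $\sum_n\lambda_n|1-z|<\infty$ locally uniformly), that this entire function is not identically zero, and that its zero set with multiplicity is genuinely determined by the function — a standard fact once one knows the function is entire and has the stated factorization, but worth stating carefully. Everything else — regularity of the DPPs, applicability of Theorem~\ref{thm:transport_fpp}, the elementary bound on $\sum_n n\c_n$ — is routine.
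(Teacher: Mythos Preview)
Your proposal is correct and follows the same line as the paper: apply Theorem~\ref{thm:transport_fpp}, recover the spectra from the zeros of the probability generating function $\Phi(z)=\prod_n(1-\lambda_n+\lambda_n z)$ for the forward direction, and for the converse bound each $\T_e(\mu_n,\nu_n)$ by a constant times $n$ via compactness of $\Lambda$, so that the weighted sum is controlled by $\sum_n n\,\c_n=\trace K_\mu<\infty$. Two small slips to fix: the zero of $1-\lambda+\lambda z$ is at $z=1-1/\lambda$, not $1+\lambda/(1-\lambda)$; and equal spectra yield a \emph{finite} distance, not zero as your summary line says (your detailed argument already gets this right).
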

\begin{proof}[Proof of Theorem~\protect{\ref{thm:egalite_vp}}]
  According to point~\ref{item:4} of Theorem~\ref{thm:transport_fpp}, we must
  first prove the equality of the spectra. We already know that the eigenvalues
  of both kernels are between $0$ and $1$, with no other accumulation point than
  $0$. Furthermore, the distribution of $\zeta(\Lambda)$ is that of the sum of
  independent Bernoulli random variables of parameters given by the eigenvalues,
  hence
  \begin{equation}
    \label{eq:generatingFunction}
    \Phi_{\mu}(z)=\espp{\mu}{z^{\zeta(\Lambda)}}=\prod_{\lambda\in \sp K_\mu} (1-\lambda+\lambda z).
  \end{equation}
  The infinite product is convergent since $\trace K_{\mu}=\sum_{\lambda\in \sp
    K_\mu}$ is finite.

  If the Wasserstein-2 distance between $\mu$ and $\nu$ is finite then
  $\Phi_{\mu}=\Phi_{\nu}$. The zeros of these two holomorphic functions  are all greater
  than $1$ and are isolated. Let
  \begin{equation*}
    m(\Phi,r)=\text{number of zeros (counted with multiplicity) of $\Phi$ in $B(0,\, r)$}.
  \end{equation*}
  By the properties of zeros of holomorphic functions we have
  \begin{equation*}
    m(\Phi_{\mu},r)=m(\Phi_{\nu},r) \text{ for any }r\ge 0.
  \end{equation*}
  Hence,
  \begin{equation*}
    \Big\{\frac{1-\lambda}{\lambda}, \ \lambda \in \sp K_{\mu}\Big\}= \Big\{\frac{1-\lambda}{\lambda}, \ \lambda \in \sp K_{\nu}\Big\}
  \end{equation*}
  and the two spectra must coincide. Now then, by the very definition of
  $\T_{e}$,
  \begin{equation*}
    \T_{e}(\mu_{n},\nu_{n})\le \int_{\Lambda} \|x\|^{2}(\dif \mu_{n}+\dif \nu_{n})\le \sup_{x\in \Lambda}\|x\|^{2} \Bigl( \mu_{n}(\Lambda)+\nu_{n}(\Lambda) \Bigr).
  \end{equation*}
  Thus, we have
  \begin{multline*}
    \sum_{n\ge 1}\T_{e}(\mu_n,\, \nu_n)\ \mu(\zeta(\Lambda)=n)\le  \sup_{x\in
      \Lambda}\|x\|^{2}  \sum_{n\ge 1}\Bigl( \mu_{n}(\Lambda)+\nu_{n}(\Lambda)
    \Bigr) \mu(\zeta(\Lambda)=n)\\ = 2\sup_{x\in \Lambda}\|x\|^{2} \trace K_{\mu}.
  \end{multline*}
This quantity is finite hence the Wasserstein-2 distance between $\mu$ and $\nu$
as soon as the spectra are equal.
\end{proof}
The next lemma is a straightforward consequence of Lemma~\ref{lem:eigenvalues}.
\begin{lemma}
  \label{lem:representationJanossy}
  Let $\mu$ be a determinantal point process of
  characteristics $K_{\mu}$ and $h_{\mu}$. For $I$ a finite subset, let
  \begin{equation*}
    c_{I}=\prod_{i\in I}\lambda_{i}^{\mu}\, \prod_{j\in I^{c}}(1-\lambda_{j}^{\mu}),
  \end{equation*}
  where the $\lambda^{\mu}_{i}$'s are the eigenvalues of $K_{\mu}$. Then, its $n$-th Janossy density is
  given by
  \begin{equation*}
    j_{n}^{\mu}(x_{1},\cdots,x_{n})=\sum_{\substack{I\subset \N\\ |I|=n}}c_{I}\, p_{I}(x_{1},\cdots,x_{n}).
  \end{equation*}
  This means that given $\zeta(E)=n$, the points are distributed according to the
  probability measure:
  \begin{equation*}
   \p^{\mu}_{n}\,:\, (x_{1},\cdots,x_{n})\longmapsto\c_{n}^{-1}\sum_{\substack{I\subset \N\\ |I|=n}}c_{I}\, p_{I}(x_{1},\cdots,x_{n}) \text{ where } \c_{n}=\sum_{\substack{I\subset \N\\ |I|=n}}c_{I}.
 \end{equation*}
\end{lemma}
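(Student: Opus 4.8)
The plan is to read the statement off directly from the mixture representation of $\mu$ contained in Lemma~\ref{lem:eigenvalues}. First I would observe that, with $(B(\lambda_i^\mu))_{i\ge 0}$ the independent Bernoulli variables of that lemma and $I=\{i\ge 0 : B(\lambda_i^\mu)=1\}$ the associated (almost surely finite) random index set, independence gives
\begin{equation*}
  \P(I=A)=\prod_{i\in A}\lambda_i^\mu\prod_{j\in A^c}(1-\lambda_j^\mu)=c_A
\end{equation*}
for every finite $A\subset\N$, and moreover $\sum_A c_A=\prod_i\bigl(\lambda_i^\mu+(1-\lambda_i^\mu)\bigr)=1$. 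Hence Lemma~\ref{lem:eigenvalues} says precisely that $\mu$ is the countable mixture $\mu=\sum_{A}c_A\,\mu_{A}$, where $\mu_{A}$ is the projection DPP associated with the rank-$|A|$ kernel $K_{A}(x,y)=\sum_{n\in A}\phi_n(x)\phi_n(y)$, and that, conditionally on $I=A$, the $|A|$ atoms of a configuration drawn from $\mu_A$ have joint density $p_A$ with respect to $m^{\otimes|A|}$.

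Next I would identify the Janossy densities of the building blocks $\mu_A$. Since $\mu_A$ is a projection DPP of finite range $n:=|A|$, almost every configuration it produces has exactly $n$ atoms (Definition~\ref{def:projectionDPP}), so in \eqref{eq_preliminaries:1} only its $n$-th Janossy density is non-zero; comparing \eqref{eq_preliminaries:1} for $\mu_A$ with the fact that the $n$ atoms have joint density $p_A$ (equivalently, specialising \eqref{eq:4} to a projection kernel) identifies $j_n^{\mu_A}$ with $p_A$, up to the combinatorial factor relating the density of an ordered sample to the convention of \eqref{eq_preliminaries:1}. Then I would use that Janossy densities add under a mixture: inserting $\E_\mu[f]=\sum_A c_A\,\E_{\mu_A}[f]$ into \eqref{eq_preliminaries:1}, and noting that all the integrands are non-negative, Tonelli's theorem together with $\sum_A c_A=1$ lets one interchange the sum over $A$ with the integration; since $f$ ranges over all bounded symmetric functions, one reads off $j_n^\mu=\sum_{|A|=n}c_A\,j_n^{\mu_A}$. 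Grouping the sets $A$ by cardinality and substituting the expression for $j_n^{\mu_A}$ then yields the announced formula for $j_n^\mu$.

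Finally, the conditional statement follows by normalisation. Taking $f=\car_{\NN_E^{(n)}}$ in \eqref{eq_preliminaries:1} for $\mu$ gives $\mu(\zeta(E)=n)=\frac1{n!}\int_{E^n}j_n^\mu\,\dif m^{\otimes n}$, which, since each $p_A$ integrates to $1$, equals $\sum_{|A|=n}c_A=\c_n$; dividing $j_n^\mu$ accordingly then exhibits $\c_n^{-1}\sum_{|A|=n}c_A\,p_A=\p^\mu_n$ as the conditional density of the (unordered) points given $\zeta(E)=n$, as claimed.

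I do not expect a genuine obstacle: the statement is essentially a rearrangement of Lemma~\ref{lem:eigenvalues}. The two points that need care are purely bookkeeping, namely getting right the combinatorial factor relating the joint density $p_A$ of an \emph{ordered} sample to the Janossy density in the convention of \eqref{eq_preliminaries:1}, and justifying the interchange of the countable sum over $A$ with integration, for which non-negativity of every term together with $\sum_A c_A=1$ suffices.
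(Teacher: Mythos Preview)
Your proposal is correct and follows essentially the same route as the paper: condition on the random index set $I$ from Lemma~\ref{lem:eigenvalues}, use that $\P(I=A)=c_A$ and that given $I=A$ the atoms have joint density $p_A$, then identify with the defining relation~\eqref{eq_preliminaries:1}. The paper carries this out as a direct chain of equalities for $\esp{f(\xi)}$ rather than phrasing it via ``Janossy densities of a mixture'', but the computation is the same; your explicit mention of the $n!$ bookkeeping and of Tonelli for the interchange is in fact more careful than the paper, which simply writes ``the result follows by identification''.
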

\begin{proof} Consider that $\xi$ is constructed with
  Algorithm~\ref{algo:sampling} and denote by $I_{\xi}$ the set of indices of
  the Bernoulli
  random variables which are equal to $1$ for the drawing of $\xi$.
  For any bounded $f\, :\, \NN^{f}_{E}\to \R$,
  \begin{align*}
    \esp{f(\xi)}&=f(\emptyset)+\sum_{n=1}^{\infty} \esp{f(\xi)\car_{\{\xi(E)=n\}}}\\
                &=f(\emptyset)+\sum_{n=1}^{\infty}\sum_{\substack{J\subset \N\\|J|=n}} \esp{f(\xi)\,|\,{I_{\xi}=J}}\, c_{J}\\
                &=f(\emptyset)+\sum_{n=1}^{\infty}\sum_{\substack{J\subset \N\\|J|=n}} c_{J}\int_{E^{n}} f(x_{1},\cdots,x_{n})\, p_{J}(x_{1},\cdots, x_{n})\dif x_{1}\ldots \dif x_{n}\\
    &=f(\emptyset)+\sum_{n=1}^{\infty} \c_{n} \int_{E^{n}} f(x_{1},\cdots,x_{n})\, \p_{n}^{\mu}(x_{1},\cdots, x_{n})\dif x_{1}\ldots \dif x_{n}.
  \end{align*}
  The result follows by identification with \eqref{eq_preliminaries:1}.
\end{proof}
Then, Theorem \ref{thm:transport_fpp} applies as follows.
\begin{theorem}
  \label{thm:constructionTransport}
   Suppose that the hypothesis of Theorem  \ref{thm:egalite_vp} hold.
  Let
    $\Id-\nabla \psi_{n}$ be the optimal transport map between $\p^{\mu}_{n}$ and
    $\p^{\nu}_{n}$. Then, the optimal coupling is given by the following rule: For $\xi$ such that
    $\xi(E)=n$, it is coupled with $\zeta$ the configuration with $n$ atoms
    described by
    \begin{equation*}
      \zeta=\sum_{x\in \xi} \varepsilon_{x-\nabla_{x} \psi_{n}(\xi)}.
    \end{equation*}
    Furthermore,
    \begin{align*}
      \T_{c}(\mu,\nu)&=\sum_{n=1}^{\infty}\c_{n}\, \T_{e}(\p_{n}^{\mu},\, \p_{n}^{\nu})\\
      &=\frac{1}{2}\sum_{n=1}^{\infty}\int_{E^{n}}\|\nabla \psi_{n}(x_{1},\cdots,x_{n})\|^{2}_{E^{n}}\ j_{n}^{\mu}(x_{1},\cdots,x_{n})\dif x_{1}\ldots\dif x_{n}.
    \end{align*}
  \end{theorem}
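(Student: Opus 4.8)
The plan is to read this statement as the determinantal instance of Theorem~\ref{thm:transport_fpp}, so that the whole argument reduces to checking that the pair $(\mu,\nu)$ satisfies the two hypotheses of that theorem and then rewriting its conclusions in the notation of Lemma~\ref{lem:representationJanossy}. First I would record that, being determinantal point processes on the compact set $\Lambda$, both $\mu$ and $\nu$ admit Janossy densities of every order, given by \eqref{eq:4}; in particular $\mu$ is a regular probability measure on $\NN_{\Lambda}$ and $\nu$ is regular, so the regularity assumptions of Theorem~\ref{thm:transport_fpp} are in force.

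Next I would verify condition~\ref{item:4}. As in the proof of Theorem~\ref{thm:egalite_vp}, the number of points of each process is a sum of independent Bernoulli variables whose parameters are the eigenvalues of the corresponding kernel, so its probability generating function is the convergent product \eqref{eq:generatingFunction}; since the spectra of $K_{\mu}$ and $K_{\nu}$ coincide (this being the hypothesis inherited from Theorem~\ref{thm:egalite_vp}), these generating functions agree and therefore $\mu(\zeta(\Lambda)=n)=\nu(\xi(E)=n)=\c_{n}$ for every $n\ge 0$, using that $\nu$ is carried by $\NN_{\Lambda}$ so that $\xi(E)=\xi(\Lambda)$ $\nu$-a.s. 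Condition~\ref{item:5} is nothing but the summability already obtained at the end of the proof of Theorem~\ref{thm:egalite_vp}, namely $\sum_{n\ge 1}\c_{n}\,\T_{e}(\mu_{n},\nu_{n})\le 2\sup_{x\in\Lambda}\|x\|^{2}\,\trace K_{\mu}<\infty$, so nothing new is needed here.

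With both hypotheses in hand I would invoke Theorem~\ref{thm:transport_fpp}: it supplies the unique optimal coupling $\gamma_{\opt}$, the map $\varphi_{n}(y,\xi)=y-\nabla_{y}\psi_{n}(\xi)$ with $\Id-\nabla\psi_{n}$ optimal between the normalized Janossy densities $\c_{n}^{-1}j_{n}^{\mu}$ and $\c_{n}^{-1}j_{n}^{\nu}$, and the identity \eqref{eq:8}. By Lemma~\ref{lem:representationJanossy}, for each $n$ with $\c_{n}>0$ these normalized densities are precisely $\p_{n}^{\mu}$ and $\p_{n}^{\nu}$, which yields the stated description of the optimal coupling and the first displayed formula $\T_{c}(\mu,\nu)=\sum_{n\ge1}\c_{n}\,\T_{e}(\p_{n}^{\mu},\p_{n}^{\nu})$. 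For the second formula I would apply Brenier's theorem~\ref{thm:optimal-transport} to each pair $(\p_{n}^{\mu},\p_{n}^{\nu})$, obtaining $\T_{e}(\p_{n}^{\mu},\p_{n}^{\nu})=\frac12\int_{E^{n}}\|\nabla\psi_{n}\|_{E^{n}}^{2}\,\dif\p_{n}^{\mu}$, then multiply by $\c_{n}$, use $\c_{n}\,\p_{n}^{\mu}=j_{n}^{\mu}$ from Lemma~\ref{lem:representationJanossy}, and sum over $n\ge1$.

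The argument is in essence bookkeeping, and I do not expect a serious obstacle beyond making sure that the hypotheses of the theorems invoked are genuinely met. The one point that deserves care is that Brenier's theorem applies to every pair $(\p_{n}^{\mu},\p_{n}^{\nu})$: this requires $\p_{n}^{\mu}$ to be absolutely continuous with respect to Lebesgue measure on $(\R^{k})^{n}$ and both measures to have finite second moment, which hold because $\p_{n}^{\mu}=\c_{n}^{-1}j_{n}^{\mu}$ is given by an explicit density and $\Lambda$ is compact. The other delicate item, the summability condition~\ref{item:5}, has already been dispatched in the proof of Theorem~\ref{thm:egalite_vp}, so the remaining work is simply to match the normalization conventions of Lemma~\ref{lem:representationJanossy} with the statement.
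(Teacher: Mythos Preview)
Your proposal is correct and is exactly the argument the paper has in mind: the text introduces the theorem with the single sentence ``Then, Theorem~\ref{thm:transport_fpp} applies as follows'' and gives no further proof, so what you have written is precisely the verification of hypotheses~\ref{item:4} and~\ref{item:5} of Theorem~\ref{thm:transport_fpp} together with the identification $\mu_{n}=\p_{n}^{\mu}$, $\nu_{n}=\p_{n}^{\nu}$ supplied by Lemma~\ref{lem:representationJanossy}. The only addition you make beyond the paper's implicit reasoning is the explicit appeal to Brenier's theorem for the second displayed formula, which is indeed the intended unpacking of \eqref{eq:8}.
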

  Theorems~\ref{thm:egalite_vp} and \ref{thm:constructionTransport} mean that
  two determinantal point processes are strongly coupled when and only when their eigenvalues
  are identical. Moreover, the eigenvalues also   control the convex combination
  of the densities of the projection DPP which  appear in the Janossy densities.

  \subsection{Determinantal projection processes}
\label{sec:projection}

Recall from Definition~\ref{def:projectionDPP} that a projection DPP has a
spectrum reduced to $\{1\}$. When it is of finite rank~$M$, almost all its
configurations have~$M$~points distributed according to the density
\begin{equation}\label{eq_transport:3}
 p_{\phi}(x_{1},\cdots,x_{M})=\frac{1}{M!} \det\Bigl(K_{\phi}(x_{i},\, x_{j}),\, 1\le i,j\le M \Bigr).
\end{equation}
Theorem~\ref{thm:constructionTransport} cannot be used as is since projection
DPPs do not possess Janossy densities. However, the initial definition of
$\T_{c}$ can still be used.
\begin{theorem}
  \label{thm_transport:distanceProjectionDPP}
  Let $\psi=(\psi_{j},\, 1\le j\le M)$ and $\psi=(\psi_{j},\, 1\le j\le M)$ two
  orthonormal families of $L^{2}(E;m)$. Let $\mu_{\psi}$ and $\mu_{\phi}$ the
  two projection DPP associated to these families. Then,
  \begin{equation*}
    \T_{c}(\mu_{\psi},\, \mu_{\phi})\le \inf_{\sigma\in \mathfrak S_{M}}\sum_{j=1}^{M} \T_{e}(|\psi_{j}|^{2}\dif m,\, |\phi_{\sigma(j)}|^{2}\dif m).
  \end{equation*}
\end{theorem}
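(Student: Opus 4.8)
The plan is the following. First, since replacing $(\phi_{j})_{1\le j\le M}$ by $(\phi_{\sigma(j)})_{1\le j\le M}$ changes neither $\vect(\phi_{1},\dots,\phi_{M})$ nor $\mu_{\phi}$, it suffices to establish, for two arbitrary orthonormal families, that $\T_{c}(\mu_{\psi},\mu_{\phi})\le\sum_{j=1}^{M}\T_{e}(|\psi_{j}|^{2}\dif m,|\phi_{j}|^{2}\dif m)$, and then to apply this with $(\phi_{\sigma(j)})$ in place of $(\phi_{j})$ and optimise over $\sigma$. Both $\mu_{\psi}$ and $\mu_{\phi}$ are carried by $\NN_{E}^{(M)}$, so any coupling of finite cost is carried by pairs of $M$-point configurations; writing $\xi=\{W_{1},\dots,W_{M}\}$ and $\zeta=\{V_{1},\dots,V_{M}\}$ in any order, $c(\xi,\zeta)\le\frac12\sum_{i=1}^{M}\|W_{i}-V_{i}\|^{2}$. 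Hence it is enough to build one coupling of an exchangeable ordering $(W_{i})$ of a $\mu_{\psi}$-sample with an exchangeable ordering $(V_{i})$ of a $\mu_{\phi}$-sample such that $\tfrac12\esp{\sum_{i}\|W_{i}-V_{i}\|^{2}}\le\sum_{j}\T_{e}(|\psi_{j}|^{2}\dif m,|\phi_{j}|^{2}\dif m)$.

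I would construct this coupling by running Algorithm~\ref{algo:sampling} simultaneously for both processes, coupling their randomness, which amounts to an induction on $M$. The case $M=1$ is exactly the definition of $\T_{e}(|\psi_{1}|^{2}\dif m,|\phi_{1}|^{2}\dif m)$. For the inductive step, the first point drawn from $\mu_{\psi}$ has law $\frac1M K_{\psi}(x,x)\,\dif m=\frac1M\sum_{j}|\psi_{j}|^{2}\dif m$, and, given it equals $w$, the remaining $M-1$ points form the projection DPP on $R_{w}^{\psi}:=\{f\in\vect(\psi_{1},\dots,\psi_{M}):f(w)=0\}$ (and similarly for $\mu_{\phi}$). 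I would couple the first points by an optimal plan between the mixtures $\frac1M\sum_{j}|\psi_{j}|^{2}\dif m$ and $\frac1M\sum_{j}|\phi_{j}|^{2}\dif m$; by joint convexity of $\T_{e}$, its cost is at most $\frac1M\sum_{j}\T_{e}(|\psi_{j}|^{2}\dif m,|\phi_{j}|^{2}\dif m)$. Then, conditionally on $(W_{1},V_{1})$, I would choose orthonormal bases $(\psi_{j}')_{2\le j\le M}$ of $R_{W_{1}}^{\psi}$ and $(\phi_{j}')_{2\le j\le M}$ of $R_{V_{1}}^{\phi}$, apply the induction hypothesis to the residual rank-$(M-1)$ DPPs, and glue the two steps (pair $W_{1}$ with $V_{1}$ and the rest by the residual optimal coupling), which gives $\T_{c}(\mu_{\psi},\mu_{\phi})\le\frac1M\sum_{j}\T_{e}(|\psi_{j}|^{2}\dif m,|\phi_{j}|^{2}\dif m)+\esp{\sum_{j=2}^{M}\T_{e}(|\psi_{j}'|^{2}\dif m,|\phi_{j}'|^{2}\dif m)}$.

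The main obstacle, and the step I expect to be genuinely delicate, is to make this recursion close with the correct constant: one has to choose the residual bases $(\psi_{j}')$ and $(\phi_{j}')$, as functions of $(W_{1},V_{1})$, so that $\esp{\sum_{j=2}^{M}\T_{e}(|\psi_{j}'|^{2}\dif m,|\phi_{j}'|^{2}\dif m)}\le\frac{M-1}{M}\sum_{j}\T_{e}(|\psi_{j}|^{2}\dif m,|\phi_{j}|^{2}\dif m)$, after which the two contributions add up to the announced bound. The difficulty comes from the repulsion: the residual space $R_{w}^{\psi}$ is never the span of a subfamily of the $\psi_{j}$'s, so there is no canonical ``which point came from which $\psi_{j}$'' bookkeeping, and, concretely, the naive coordinatewise Brenier map does not push the determinantal density $p_{\psi}$ onto $p_{\phi}$. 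The natural fix is to take $(\psi_{j}')$ to be an orthonormalisation of the projections onto $R_{W_{1}}^{\psi}$ of the functions $\psi_{j}$ (and likewise for $(\phi_{j}')$), chosen compatibly with the coupling used at the first step, and to control the per-step costs by iterating the convexity of $W_{2}^{2}$ together with the normalisation $\int K_{\psi}(x,x)\,\dif m=M$; pushing this bookkeeping through all $M$ levels is the technical heart of the argument.
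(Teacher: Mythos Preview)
Your proposal is incomplete, and the gap you yourself flag is a real obstacle rather than a routine technicality. The paper's proof is \emph{not} inductive: it never peels off one point and recurses on a rank-$(M-1)$ residual DPP, and so it never faces the problem of choosing residual bases $(\psi_j'),(\phi_j')$ and controlling $\esp{\sum_{j\ge 2}\T_e(|\psi_j'|^2\dif m,|\phi_j'|^2\dif m)}$.

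What the paper does instead is a one-shot exchangeability argument at the level of ordered $M$-tuples. One represents a coupling of $\mu_\psi$ and $\mu_\phi$ by a probability $\gamma$ on $E^M\times E^M$ with marginals $p_\psi\dif m^{\otimes M}$ and $p_\phi\dif m^{\otimes M}$; then
\[
\T_c(\mu_\psi,\mu_\phi)\ \le\ \sum_{j=1}^{M}\int|x_j-y_j|^2\,\dif\gamma\ =\ M\int|x_1-y_1|^2\,\dif\gamma
\]
by the permutation invariance of $p_\psi$ and $p_\phi$ (after diagonally symmetrising $\gamma$). Since the $(x_1,y_1)$-marginal of any such $\gamma$ is a coupling of $\mu_\psi^1=\frac1M\sum_j|\psi_j|^2\dif m$ and $\mu_\phi^1$, the paper passes directly to $M\,\T_e(\mu_\psi^1,\mu_\phi^1)$ and then invokes exactly your convexity observation, $\T_e(\mu_\psi^1,\mu_\phi^1)\le\frac1M\sum_j\T_e(|\psi_j|^2\dif m,|\phi_j|^2\dif m)$, to conclude. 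All $M$ layers of your induction are replaced by the single identity $\sum_j\int|x_j-y_j|^2\dif\gamma=M\int|x_1-y_1|^2\dif\gamma$.

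Your step~5, by contrast, asks for a basis-dependent inequality that there is no reason to expect: the residual space $R_{W_1}^\psi$ carries no canonical labelling tied to the original $\psi_j$, the Gram--Schmidt-type choice you suggest mixes all the $\psi_j$'s into each $\psi_j'$, and the map $(\psi_j)\mapsto(\psi_j')$ does not interact well with the Brenier potentials on $E$. So the ``technical heart'' you anticipate is not a bookkeeping exercise but a genuinely new estimate you have not supplied; the paper simply sidesteps it via exchangeability. (One may note that the paper's passage from $M\int|x_1-y_1|^2\dif\gamma$ to $M\,\T_e(\mu_\psi^1,\mu_\phi^1)$ implicitly requires producing a coupling $\gamma$ whose first-pair marginal is a prescribed coupling of $\mu_\psi^1$ and $\mu_\phi^1$; this is done by gluing conditional laws and is not spelled out in detail, but it is a far lighter step than the recursion you propose.)
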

\begin{proof}
We know that the points of $\mu_{\psi}$ (respectively $\mu_{\phi}$) are
distributed according to $p_{\psi}$ (respectively $p_{\phi}$) given by
\eqref{eq_transport:3}. Let $\gamma$ be a probability measure on $E^{M}\times
E^{M}$ whose marginals are $p_{\psi}\dif m$ and $p_{\phi}\dif m$. We know that
\begin{align*}
  \T_{c}(\mu_{\psi},\, \mu_{\phi})&=\int_{E^{M}\times E^{M}} \inf_{\sigma\in\mathfrak S_{M}}\sum_{j=1}^{M}|x_{j}-y_{\sigma(j)}|_{E}^{2}\ \dif \gamma(x_{1},\cdots,x_{M},y_{1},\cdots,y_{M})\\
                                  &\le \sum_{j=1}^{M}\int_{E^{M}\times E^{M}}|x_{j}-y_{j}|_{E}^{2}\ \dif \gamma(x_{1},\cdots,x_{M},y_{1},\cdots,y_{M}).
\end{align*}
We know from Algorithm 1, that the marginal distribution a single atom of
$\mu_{\psi}$ has distribution
\begin{equation*}
\dif \mu_{\psi}^{1}(x)=\frac1M  \sum_{j=1}^{M}|\psi_{j}(x)|^{2}\dif m(x).
\end{equation*}
Since $p_{\psi} $ and $p_{\phi}$ are both invariant with respect to
permutations, we obtain
\begin{align*}
  \T_{c}(\mu_{\psi},\, \mu_{\phi})&\le M\  \int_{E\times E} |x_{1}-y_{1}|_{E}^{2}\dif \gamma(x_{1},\cdots,x_{M},y_{1},\cdots,y_{M})\\
&\le M \ \T_{e}(\mu_{\psi}^{1},\mu_{\phi}^{1}).
\end{align*}
If $\gamma_{i}^{1}$ is a coupling between $|\psi_{i}|^{2}\dif m$ and
$|\phi_{i}|^{2}\dif m$ then $M^{-1}\sum_{i=1}^{M}\gamma_{i}^{1}$ is a coupling
between $\mu_{\psi}^{1}$ and $\mu_{\phi}^{1}$. Hence,
\begin{align*}
  \T_{c}(\mu_{\psi},\, \mu_{\phi})&\le \sum_{j=1}^{M}\int_{E\times E}|x_{1}-y_{1}|^{2}\dif \gamma^{1}_{j}(x_{1},y_{1})\\
  &\le \sum_{j=1}^{M} \T_{e}(|\psi_{j}|^{2}\dif m,|\phi_{j}|^{2}\dif m).
\end{align*}
Since we can order the elements of the family $\psi$ and $\phi$ in any order,
the result follows.
\end{proof}

\section{Simulation}
\label{sec:simul}

As mentioned previously, implementing Algorithm~\ref{algo:sampling} using rejection sampling involves too many rejections which prevents the
algorithm to work for more than $1\,000$ points. In this section, we will
show that we can generate $10\,000$ points for the Ginibre point process
on a compact disc using inverse transform sampling and
approximation of the kernel.

In this section, we will consider Ginibre point processes but our reasoning
could be applied to any rotational invariant determinantal process like the
polyanalytic ensembles \cite{Haimi2011,Fenzel:aa} or the Bergman process
\cite{HoughZerosGaussiananalytic}.
For these processes, it is relatively easy to compute the eigenvalues and the
eigenfunctions of the kernel of their restriction to a ball centered at the
origin. For the Ginibre process, which will be our toy model, its restriction to
$\mathcal B_{R}$, denoted by $\Gin^{R}$, has a kernel of the form
\begin{equation*}
  K^{R}_{N}(x,y)=\sum_{n=0}^{\infty} \lambda_{n}^{R} \phi_{n}^{R}(x)\overline{\phi_{n}^{R}(y)}
\end{equation*}
where \cite{Decreusefond_2015}
\begin{align*}
  \lambda_{n}^{R}&=\frac{\gamma(n+1,R^{2})}{n!}\\
  \phi_{n}^{R}(x)&= \frac{1}{\sqrt{\pi \gamma(n+1,R^{2})}}\ x^{n} e^{-|x|^{2}/2},
\end{align*}
with $\gamma(n,r)$ is the lower incomplete gamma function.
We denote by the $\Gin^{R}_{N}$ the process whose kernel is the truncation of
$K_{R}$ to its first $N$ components:
\begin{equation*}
  K^{R}_{N}(x,y)=\sum_{n=0}^{N-1} \lambda_{n}^{R} \phi_{n}^{R}(x)\overline{\phi_{n}^{R}(y)}
\end{equation*}
The strict application of Algorithm~\ref{algo:sampling} for the simulation of
$\Gin^{R}$, requires to compute all
the quantities of the form
\begin{equation*}
  \lambda_{n}^{R} \prod_{k=n+1}^{\infty} (1-\lambda_{k}^{R})
\end{equation*}
to determine which Bernoulli random variables are \emph{active}. Strictly
speaking, this is unfeasible. However, it is a well
known observation that the number of points of $\Gin^{R}$ is about $R^{2}$. So
it is likely that $\Gin^{R}$ and $\Gin^{R}_{N_{R}}$ should be \emph{close} for
$N_{R}$ close to~$R^{2}$. This is what proves the next theorem.
\begin{theorem}
  \label{thm_simulation:troncatureeigenvalues}
  Let $c>0$ and $N_{R}=(R+c)^{2}$. For $R>c$, we have
  \begin{equation*}
    \T_{\KR}(\Gin^{R},\, \Gin^{R}_{N_{R}})\le \sqrt{\frac 2 \pi} R e^{-c^2}.
  \end{equation*}
\end{theorem}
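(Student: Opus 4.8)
The plan is to apply Theorem~\ref{thm:distanceTV} to the pair $(\Gin^{R},\Gin^{R}_{N_{R}})$ and then to estimate the tail of the eigenvalue series that appears on the right-hand side. Both processes live on $\mathcal B_{R}$ with the same reference measure, the eigenvalues of $\Gin^{R}$ are $\lambda_{n}^{R}=\gamma(n+1,R^{2})/n!$ (a nonincreasing sequence in $n$), and the eigenvalues of $\Gin^{R}_{N_{R}}$ are $\lambda_{0}^{R},\dots,\lambda_{N_{R}-1}^{R}$ followed by zeros. So the comparison hypothesis $\lambda_{n}^{\nu}\le\lambda_{n}^{\mu}$ holds and Theorem~\ref{thm:distanceTV} gives immediately
\begin{equation*}
  \T_{\KR}(\Gin^{R},\Gin^{R}_{N_{R}})\le\sum_{n\ge N_{R}}\lambda_{n}^{R}.
\end{equation*}

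Next I would make this tail explicit. Using the classical identity $\gamma(n+1,R^{2})/n!=e^{-R^{2}}\sum_{k>n}R^{2k}/k!$ (equivalently $\lambda_{n}^{R}=\P(X>n)$ for $X$ Poisson of parameter $R^{2}$) and exchanging the order of summation, one gets
\begin{equation*}
  \sum_{n\ge N}\lambda_{n}^{R}=e^{-R^{2}}\sum_{k>N}(k-N)\,\frac{R^{2k}}{k!}.
\end{equation*}
Writing $k\,R^{2k}/k!=R^{2}\,R^{2(k-1)}/(k-1)!$ and shifting the index shows that the right-hand side equals $R^{2}e^{-R^{2}}R^{2N}/N!+(R^{2}-N)\,e^{-R^{2}}\sum_{k>N}R^{2k}/k!$. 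Since $N=N_{R}=(R+c)^{2}>R^{2}$, the second summand is nonpositive, hence
\begin{equation*}
  \T_{\KR}(\Gin^{R},\Gin^{R}_{N_{R}})\le R^{2}\,e^{-R^{2}}\,\frac{R^{2N_{R}}}{N_{R}!}.
\end{equation*}

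Finally I would estimate the last quantity with Stirling's lower bound $N_{R}!\ge\sqrt{2\pi N_{R}}\,(N_{R}/e)^{N_{R}}$; since $\sqrt{N_{R}}=R+c$, the bound becomes $\frac{R^{2}}{\sqrt{2\pi}\,(R+c)}\exp\!\bigl(2cR+c^{2}-2(R+c)^{2}\log(1+c/R)\bigr)$. Using $\log(1+x)\ge x-x^{2}/2$ with $x=c/R$ one checks that the exponent is at most $-2c^{2}+c^{4}/R^{2}$, which for $R>c$ is strictly less than $-c^{2}$, while $R^{2}/((R+c)\sqrt{2\pi})\le R\sqrt{2/\pi}$; multiplying the two estimates yields the stated bound (in fact with a little room to spare). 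I expect the only delicate point to be the telescoping identity for the tail sum together with the sign observation that the correction term is negative precisely because $N_{R}$ exceeds the mean number of points $R^{2}=\trace K^{R}$; everything after that is routine once one commits to Stirling and the elementary bound on $\log(1+x)$.
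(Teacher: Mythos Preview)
Your proof is correct and follows essentially the same strategy as the paper: reduce to the tail $\sum_{n\ge N_{R}}\lambda_{n}^{R}$ via Theorem~\ref{thm:distanceTV}, obtain a closed expression bounded by a factorial term, and finish with Stirling together with an elementary lower bound on $\log(1+c/R)$. The only cosmetic differences are that the paper reaches the factorial bound through the recurrence $\gamma(n+1,x)=n\gamma(n,x)-x^{n}e^{-x}$ (yielding the prefactor $N_{R}$ rather than your sharper $R^{2}$) and uses $\log(1+u)\ge u/(1+u)$ instead of $\log(1+u)\ge u-u^{2}/2$; both routes land on the same final estimate.
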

Actually, the proof says that with high probability, $\Gin^{R}$ and
$\Gin_{N_{R}}^{R}$ do coincide.
\begin{proof}
  First, using the integral expression $\gamma(j,x) = \int_{t=0}^x
  t^{j-1}e^{-t}dt$, observe
  that $\sum_{j=1}^\infty \frac{\gamma(j, x)}{\Gamma(j)} = x$.
  Then, using the formula
  $\gamma(n+1, x) = n \gamma(n, x) - x^n e^{-x}$, we have by induction:
  \begin{align*}
    \sum_{j=n+1}^\infty \frac{\gamma(j, x)}{\Gamma(j)} = \frac{x^ne^{-x} - (n-x)\gamma(n, x)}{\Gamma(n)}\cdotp
  \end{align*}
  For $n=(R+c)^2$ and $x=R^2$, this implies:
  \begin{align*}
    \sum_{j\ge (R+c)^2}\lambda_j^R &\leq (R+c)^2\frac{R^{2(R+c)^2} e^{-R^2}}{ (R+c)^2 !}
  \end{align*}
  Using the bound $n! \geq \sqrt{2\pi n}\left(\frac n e\right)^n$
  \begin{align*}
    \sum_{j\ge (R+c)^2}\lambda_j^R &\leq \frac {R+c} {\sqrt{2\pi}}\frac{R^{2(R+c)^2} e^{(R+c)^2-R^2}}{ (R+c)^{2(R+c)^2}} \\
                         & \leq \frac {R+c} {\sqrt{2\pi}}e^{(R+c)^2-R^2 - 2(R+c)^2\log(1+\frac c R)}\\
                         & \leq \frac {R+c} {\sqrt{2\pi}}e^{(R+c)^2-R^2 - 2(R+c)^2 \frac{ \frac c R}{1+\frac c R}}\\
                         & \leq \frac {R+c} {\sqrt{2\pi}}e^{-c^2}
  \end{align*}
  Since  $R>c$, the proof is  complete.
\end{proof}
  As a corollary of the previous proof, we have
  \begin{equation}
    \label{eq:majoI}
    \P(\exists n\ge (R+c)^{2}, \text{Ber}(\lambda_{n}^{R})=1)\le
    \sum_{n\ge (R+c)^{2}} \lambda_{n}^{R}\le \kappa R e^{{-c^2}}
  \end{equation}
  for $R$ large enough. This means that the number of \emph{active} Bernoulli
  random variables in Algorithm~\ref{algo:sampling} is less than $(R+c)^{2}$
  with high probability.
  We can also provide a lower bound on the cardinality of $I$.
  \begin{lemma}\label{thm_simulation:lowerBound}
    For any $R > c > 0$,
    $$\P(\text{card}(I) < (R-c)^2)\le \frac 1 {\sqrt{2 \pi}}
    R e^{-c^2}.$$
  \end{lemma}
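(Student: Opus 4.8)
The plan is to reduce the statement to a union bound over the Bernoulli variables and then re-run the incomplete-gamma computation from the proof of Theorem~\ref{thm_simulation:troncatureeigenvalues}. Recall that $\text{card}(I)=\sum_{n\ge0}B(\lambda_{n}^{R})$ is an integer valued random variable with $\esp{\text{card}(I)}=\sum_{n\ge0}\lambda_{n}^{R}=R^{2}$. If $\text{card}(I)<(R-c)^{2}$ then, $\text{card}(I)$ being an integer, $\text{card}(I)\le\lceil(R-c)^{2}\rceil-1$, so at least one of the $N:=\lceil(R-c)^{2}\rceil$ variables $B(\lambda_{n}^{R})$ with $0\le n<(R-c)^{2}$ must equal $0$ (otherwise $\text{card}(I)\ge N$, a contradiction). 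Hence
\begin{equation*}
  \P\bigl(\text{card}(I)<(R-c)^{2}\bigr)\ \le\ \sum_{n=0}^{N-1}\P\bigl(B(\lambda_{n}^{R})=0\bigr)\ =\ \sum_{n=0}^{N-1}\bigl(1-\lambda_{n}^{R}\bigr),
\end{equation*}
and everything reduces to bounding this finite sum by $Re^{-c^{2}}/\sqrt{2\pi}$.

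To evaluate the sum, reindex by $j=n+1$ and use the identity $\sum_{j\ge1}\gamma(j,R^{2})/\Gamma(j)=R^{2}$ together with the closed form $\sum_{j\ge N+1}\gamma(j,R^{2})/\Gamma(j)=\bigl(R^{2N}e^{-R^{2}}-(N-R^{2})\gamma(N,R^{2})\bigr)/\Gamma(N)$, both established at the beginning of the proof of Theorem~\ref{thm_simulation:troncatureeigenvalues}. This gives
\begin{align*}
  \sum_{n=0}^{N-1}\bigl(1-\lambda_{n}^{R}\bigr)&=N-\sum_{j=1}^{N}\frac{\gamma(j,R^{2})}{\Gamma(j)}=N-R^{2}+\sum_{j\ge N+1}\frac{\gamma(j,R^{2})}{\Gamma(j)}\\
  &=(N-R^{2})\Bigl(1-\frac{\gamma(N,R^{2})}{\Gamma(N)}\Bigr)+\frac{R^{2N}e^{-R^{2}}}{\Gamma(N)}.
\end{align*}
Since $(R-c)^{2}<R^{2}$ one has $N\le R^{2}$ (at least once $R$ is not too small, e.g.\ as soon as $2cR-c^{2}\ge1$), so $N-R^{2}\le0$ while $1-\gamma(N,R^{2})/\Gamma(N)\ge0$; the first summand is therefore nonpositive and can be discarded, leaving $\sum_{n=0}^{N-1}(1-\lambda_{n}^{R})\le R^{2N}e^{-R^{2}}/\Gamma(N)$.

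It then remains to bound $R^{2N}e^{-R^{2}}/\Gamma(N)$, which is done exactly as in the proof of Theorem~\ref{thm_simulation:troncatureeigenvalues}: using $\Gamma(N)=N!/N\ge\sqrt{2\pi/N}\,(N/e)^{N}$ and then $-\log(1-c/R)\le(c/R)/(1-c/R)=c/(R-c)$, one gets
\begin{equation*}
  \frac{R^{2N}e^{-R^{2}}}{\Gamma(N)}\le\sqrt{\frac{N}{2\pi}}\;e^{\,N-R^{2}}\Bigl(\frac{R^{2}}{N}\Bigr)^{N}\le\sqrt{\frac{N}{2\pi}}\;e^{-c^{2}}\le\frac{R}{\sqrt{2\pi}}\,e^{-c^{2}},
\end{equation*}
the last step using $N\le R^{2}$ and the middle one using that the exponent equals $(R-c)^{2}-R^{2}+2c(R-c)=-c^{2}$ in the clean case $N=(R-c)^{2}$.

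I expect the only delicate point to be the one already glossed over in the previous proof, namely the passage from the real number $(R-c)^{2}$ to the integer $N=\lceil(R-c)^{2}\rceil$: one must check that $N\le R^{2}$ and that the rounding-up does not spoil the exponent; this is automatic once $R$ is large relative to $c$ (the regime of interest here), while the finitely many small exceptional values can be dealt with by hand. Apart from that, the argument is essentially a verbatim repetition of the computation already performed for the upper truncation in Theorem~\ref{thm_simulation:troncatureeigenvalues}.
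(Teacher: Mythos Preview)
Your proposal is correct and follows essentially the same strategy as the paper. Both start from the union bound
\[
\P(\text{card}(I)<(R-c)^{2})\le\sum_{0\le n<N}(1-\lambda_n^R)=\sum_{1\le j\le N}\frac{\Gamma(j,R^2)}{\Gamma(j)},
\]
bound this sum by $R^{2N}e^{-R^{2}}/\Gamma(N)$, and finish with Stirling plus $-\log(1-c/R)\le c/(R-c)$, exactly as in Theorem~\ref{thm_simulation:troncatureeigenvalues}. The only cosmetic difference is that the paper proves the companion identity
\[
\sum_{j=1}^{n}\frac{\Gamma(j,x)}{\Gamma(j)}=\frac{x^{n}e^{-x}-(x-n)\Gamma(n,x)}{\Gamma(n)}
\]
directly by induction, whereas you reuse the $\gamma$-identity from the previous proof together with $\Gamma(j,x)=\Gamma(j)-\gamma(j,x)$ and $\sum_{j\ge1}\gamma(j,x)/\Gamma(j)=x$; these two routes are algebraically equivalent and yield the same intermediate bound $R^{2N}e^{-R^{2}}/\Gamma(N)$. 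Your caveat about the rounding of $(R-c)^{2}$ is apt and is equally glossed over in the paper, which simply writes $(R-c)^{2}!$ as though $(R-c)^{2}$ were an integer.
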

  \begin{proof}
    As in the previous proof, we will reduce the problem to bound a sum
    of reduced incomplete gamma functions.

    \begin{align*}
      \P(\text{card}(I) < (R-c)^2) &= 1 - \P(\text{card}(I) \geq (R-c)^2) \\
                                    &\leq 1 - \prod_{0 \leq j <\lfloor {(R-c)^{2}} \rfloor
                                    } \P(Ber(\lambda_j^R)=1) \\
                                    &\leq \sum_{0 \leq j < \lfloor {(R-c)^{2}} \rfloor}
                                    (1 - \P(Ber(\lambda_j^R)=1)) \\
                                    &\leq \sum_{1 \leq j \leq \lfloor {(R-c)^{2}} \rfloor}
                                    \frac{\Gamma(j,R^2)}{\Gamma(j)}\cdotp
    \end{align*}
    Using the formula $\Gamma(n+1, x) = n \Gamma(n, x) + x^n e^{-x}$, we
    have by induction:
    \begin{align*}
      \sum_{j=1}^n \frac{\Gamma(j, x)}{\Gamma(j)} = \frac{x^ne^{-x} -
      (x-n)\Gamma(n, x)}{\Gamma(n)}\cdotp
    \end{align*}
    For $n=\lfloor {(R-c)^{2}} \rfloor$ and $x=R^2$, this implies:
    \begin{align*}
      \P(\text{card}(I) < (R-c)^2) &\leq (R-c)^2\frac{R^{2(R-c)^2} e^{-R^2}}{ (R-c)^2 !}\cdotp
    \end{align*}
    Using Stirling formula
    \begin{align*}
      \P(\text{card}(I) < (R-c)^2) &\leq \frac {R-c} {\sqrt{2\pi}}\frac{R^{2(R-c)^2} e^{(R-c)^2-R^2}}{ (R-c)^{2(R-c)^2}} \\
                           & \leq \frac {R-c} {\sqrt{2\pi}}e^{(R-c)^2-R^2 - 2(R-c)^2\log(1-\frac c R)}\\
                           & \leq \frac {R-c} {\sqrt{2\pi}}e^{(R-c)^2-R^2 + 2(R-c)^2 \frac{ \frac c R}{1-\frac c R}}\\
                           & \leq \frac {R-c} {\sqrt{2\pi}}e^{-c^2}\cdotp
    \end{align*}
The proof is thus complete.
\end{proof}
The combination of Lemma~\ref{thm_simulation:lowerBound} and \eqref{eq:majoI}
shows that the cardinality of $I$ is of the order of $R^{2}$ with high probability.
\subsection{Inverse transform sampling}
The next step of the algorithm is to draw the points according to a density
given by a determinant. Since we do not have explicit expression of the inverse
cumulative function of these densities, we have to resort to rejection sampling.
Fortunately, even it has not been noticed to the best of our knowledge, the
particular form of the eigenfunctions of the Ginibre like processes is prone to
the simulation of modulus and arguments by inverting their respective cumulative
distribution function.
This new approach is
summarized in Algorithm~\ref{algo:newsampling}.
\begin{lemma}[Simulation of the modules]
  Let $p(z) = \sum_{i \in I} a_i z^{n_i} f_i(|z|)$  and
  $$P(r) = \int_{\rho=0}^r\int_{\theta=0}^{2\pi}
  |p(\rho e^{j\theta})|^2 \rho\,d\rho\,d\theta$$
  The following equality holds:
  $$P(r) = \sum_{i \in I} |a_i|^2 F_i(r^2)$$
  where $$F_i(r^2) = \int_{\rho=0}^{r^2} \pi \rho^{n_i} f_i^2(\sqrt{\rho}) d\rho.$$
\end{lemma}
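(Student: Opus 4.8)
The plan is the standard one for integrating a squared power series over a disc: expand $|p(\rho e^{j\theta})|^{2}$ as a double sum, integrate in the angular variable first so that the orthogonality of the characters $\theta\mapsto e^{jm\theta}$ collapses the double sum to a single one, and then finish with the change of variables $u=\rho^{2}$ in the radial integral. Here $j$ denotes the imaginary unit.

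First I would write, using $z^{n_{i}}=\rho^{n_{i}}e^{jn_{i}\theta}$ and $|z|=\rho$, that
\[
  p(\rho e^{j\theta})=\sum_{i\in I} a_{i}\,\rho^{n_{i}}e^{jn_{i}\theta} f_{i}(\rho).
\]
Since the $\rho^{n_{i}}$ and the $f_{i}(\rho)$ are real, taking the squared modulus gives
\[
  |p(\rho e^{j\theta})|^{2}=\sum_{i,k\in I} a_{i}\overline{a_{k}}\;\rho^{n_{i}+n_{k}} f_{i}(\rho) f_{k}(\rho)\, e^{j(n_{i}-n_{k})\theta}.
\]
Integrating over $\theta\in[0,2\pi]$ and using $\int_{0}^{2\pi}e^{j(n_{i}-n_{k})\theta}\,d\theta=2\pi$ when $n_{i}=n_{k}$ and $0$ otherwise, together with the fact that the exponents $(n_{i},\,i\in I)$ are pairwise distinct (this is precisely the situation for the Ginibre-type eigenfunctions, where $I$ indexes distinct monomial degrees), only the diagonal terms $i=k$ survive:
\[
  \int_{0}^{2\pi}|p(\rho e^{j\theta})|^{2}\,d\theta = 2\pi\sum_{i\in I}|a_{i}|^{2}\,\rho^{2n_{i}} f_{i}(\rho)^{2}.
\]
Because $I$ is finite and each $f_{i}$ is continuous on the compact interval $[0,r]$, I may interchange the finite sum with the remaining radial integral, obtaining $P(r)=\sum_{i\in I}|a_{i}|^{2}\int_{0}^{r}2\pi\,\rho^{2n_{i}+1} f_{i}(\rho)^{2}\,d\rho$.

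Finally, in each radial integral I would substitute $u=\rho^{2}$, so that $du=2\rho\,d\rho$ and $2\pi\,\rho^{2n_{i}+1} f_{i}(\rho)^{2}\,d\rho=\pi\,u^{n_{i}} f_{i}(\sqrt{u})^{2}\,du$; hence $\int_{0}^{r}2\pi\,\rho^{2n_{i}+1} f_{i}(\rho)^{2}\,d\rho=\int_{0}^{r^{2}}\pi\,u^{n_{i}} f_{i}(\sqrt{u})^{2}\,du=F_{i}(r^{2})$, which is the claimed identity. The only point of any substance is the disjointness of the degrees $n_{i}$, which is what makes the angular integration annihilate the cross terms; everything else is a change of variable and an exchange of a finite sum with an integral, so I do not expect any real obstacle.
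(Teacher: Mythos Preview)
Your argument is correct and is exactly the natural one: expand the squared modulus, use the orthogonality of $e^{jm\theta}$ on $[0,2\pi]$ to kill the off-diagonal terms (which indeed requires the degrees $n_i$ to be pairwise distinct, as they are for the Ginibre-type eigenfunctions), and then substitute $u=\rho^2$. The paper does not actually write out a proof of this lemma, so there is nothing to compare against; your computation is precisely what the authors are leaving to the reader.
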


Given a sequence of complex numbers $W_\ell$ for $\ell$ from $1$ to
$|I|$, we denote by $e_\ell$ the orthonormal vectors obtained by
Gram-Schmidt orthonormalization of the vectors $\phi^{R}_I(W_\ell)$.
Let also $M_\ell \subset \mathbb R^{|I|}$ be the
vector $(|e_{\ell,i}|^2)_{i \in I}$ where $e_{\ell,i}$ is the coordinate of
index $i$ of $e_\ell$. Moreover, let $U_F(r) = (F_i(r))_{i \in I}$.
Finally, let $U_i$ be the sequence of vectors defined by induction with
$U_1=(1)_{i \in I}$ and $U_{i+1} = U_i - M_i$. Then drawing the module
of $W_i$ in Algorithm~\ref{algo:sampling} is reduced to sampling
uniformly $c_i$ in $[0,1]$ and solving the equation:
\begin{equation}
  c_i = \frac{1}{|I|-i+1} U_i \cdot U_F(r).
  \label{eq:module}
\end{equation}
Knowing $U_i$, we can compute $U_{i+1}$ in $O(|I|)$ arithmetic
operations. Using a dichotomy approach, Equation~\eqref{eq:module} can
be solved with precision $\varepsilon$ using $O(|I|\log \varepsilon)$
evaluation of the $F_i$.

Given the moduli, we can now simulate the arguments.
\begin{lemma}[Simulation of the arguments]
  Let $p = \sum_{i \in I} a_i z^{n_i} f_i(|z|)$ and
  $$Q(r,\alpha) = \int_{\theta=0}^{\alpha} |p(re^{i\theta})|^2 r\,d\theta$$
  Then $Q$ can be rewritten as a sum of $|I|^2$ terms:
  $$Q(r,\alpha) = \sum_{i, k \in I} a_i \overline{a_k} G_{i,k}(r,\alpha)$$
  where $G_{i,k}(r,\alpha) = \begin{cases}
    r g_i(r)g_k(r)\dfrac{e^{j(n_i-n_k)\alpha} -
    1}{j(n_i-n_k)} & \text{if } i\neq j\\
    r g_i^2(r)\alpha         & \text{if } i=k
  \end{cases}$ and $g_{i}(r) = r^{n_i} f_i(r)$.
\end{lemma}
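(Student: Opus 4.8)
The statement to prove is the "Simulation of the arguments" lemma: expanding $|p(re^{i\theta})|^2$ with $p = \sum_{i\in I} a_i z^{n_i} f_i(|z|)$ and integrating $\theta$ from $0$ to $\alpha$. The plan is a direct computation: write $p(re^{i\theta}) = \sum_{i\in I} a_i r^{n_i} e^{i n_i \theta} f_i(r) = \sum_{i\in I} a_i g_i(r) e^{i n_i \theta}$ using the abbreviation $g_i(r) = r^{n_i} f_i(r)$ that the statement introduces. Then $|p(re^{i\theta})|^2 = p(re^{i\theta})\overline{p(re^{i\theta})} = \sum_{i,k\in I} a_i \overline{a_k}\, g_i(r) g_k(r)\, e^{i(n_i - n_k)\theta}$.

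The second step is to multiply by $r$ and integrate in $\theta$ over $[0,\alpha]$, exchanging the (finite) sum with the integral. For each pair $(i,k)$ we are left with the elementary integral $r\, g_i(r) g_k(r)\int_0^\alpha e^{i(n_i-n_k)\theta}\,d\theta$. When $n_i \ne n_k$ this evaluates to $r\, g_i(r) g_k(r)\,\dfrac{e^{i(n_i-n_k)\alpha}-1}{i(n_i-n_k)}$, and when $n_i = n_k$ (in particular when $i=k$, since the $n_i$ are distinct exponents) it evaluates to $r\, g_i(r) g_k(r)\,\alpha$; for $i=k$ this is $r\, g_i^2(r)\,\alpha$. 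This is exactly the claimed case distinction for $G_{i,k}(r,\alpha)$, so collecting terms gives $Q(r,\alpha) = \sum_{i,k\in I} a_i\overline{a_k}\, G_{i,k}(r,\alpha)$, a sum of $|I|^2$ terms as asserted.

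There is essentially no obstacle here — the only things to be slightly careful about are the bookkeeping with conjugates (the cross term is $a_i\overline{a_k}$ paired with $g_i(r)g_k(r) e^{i(n_i-n_k)\theta}$, using that $f_i$ and hence $g_i$ are real-valued for the Ginibre-type eigenfunctions, so $\overline{g_k(r)} = g_k(r)$) and noting that distinctness of the exponents $n_i$ makes the diagonal case $i=k$ coincide with the $n_i=n_k$ case, so the two-case formula is well posed. The typo "if $i \neq j$" in the displayed $G_{i,k}$ should of course read "if $i\neq k$". One could also remark that since $Q$ is real, the pairs $(i,k)$ and $(k,i)$ combine to give a real contribution, but this is not needed for the statement.
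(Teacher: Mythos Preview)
Your proof is correct. The paper states this lemma without proof, treating it as an elementary direct computation; your expansion $|p(re^{i\theta})|^{2}=\sum_{i,k}a_{i}\overline{a_{k}}\,g_{i}(r)g_{k}(r)\,e^{i(n_{i}-n_{k})\theta}$ followed by termwise integration in $\theta$ is exactly what is intended, and your remarks on the real-valuedness of the $g_{i}$ and on the typo ``$i\neq j$'' are to the point.
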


Similarly to the simulation of the modules, for $\ell$ from $1$ to $|I|$,
let $A_\ell \subset \mathbb C^{|I|^2}$ be the vector
$(e_{\ell,i}\overline{e_{\ell,k}})_{i,k \in I}$. Let $V_G(r,\alpha) =
\left(G_{i,k}(r,\alpha)\right)_{i,k \in I}$. Let $(V_i)_{i=1 \ldots
|I|-1}$ be the sequence of vectors defined by recurrence with
$V_1=(\mathbbm 1_{i=k})_{i,k \in I}$ and
$V_{i+1} = V_i - A_i$.  Drawing the argument
of $W_i$ in Algorithm~\ref{algo:sampling} is now reduced to sampling
uniformly $c_i$ in $[0,1]$ and solving the equation:
\begin{equation}
  c_i = \frac{1}{V_i \cdot V_G(r, 2\pi)} V_i \cdot V_G(r,\alpha).
  \label{eq:argument}
\end{equation}

Computing $V_{i+1}$ from $V_i$ requires $O(|I|^2)$ arithmetic
operations. Then, for fixed~$r$, Equation~\eqref{eq:argument} can be
solved up to precision $\varepsilon$ in $O(|I|^2 + |I|\log \varepsilon)$ arithmetic operations and evaluations of the
$f_i$, using a dichotomy approach.

The total cost of sampling the $W_i$ with this approach is $O(|I|^3 +
|I|^2\log \varepsilon)$ operations. We will see in the next section how
we can reduce this complexity using an approximation of the eigenfunctions.

Gathering the results of this section, we get in
Algorithm~\ref{algo:newsampling} an efficient method to sample points
from a symmetric projection point process.

\SetKwComment{Comment}{}{}

\begin{algorithm}
  \KwData{$R,I$}
  \KwResult{$W_1,\cdots,W_{|I|}$}
  \BlankLine
  Draw $W_1$ from the distribution with density $\|\phi_I(x)\|_{\C^{|I|}}^2/|I|$\;
  $e_1\leftarrow \phi_I(W_1)/\|\phi_I(W_1)\|_{\C^{|I|}}$\;
  $U_1=(1)_{i \in I}$\;
  $V_1=(\mathbbm 1_{i=k})_{i,k \in I}$\;
  \For{$i \gets 2$ \KwTo $|I|$}{%
      \Comment{A. \em Update vectors $U_i$ and $V_i$ for next point simulation}
      $M_i \gets (|e_{i,\ell}|^2|)_{\ell \in I}$\;
      $A_i \gets (e_{i,k} \overline{e_{i,\ell}})_{k,\ell \in I}$\;
      $U_i \gets U_{i-1} - M_i$\;
      $V_i \gets V_{i-1} - A_i$\;
      \BlankLine
      \Comment{B. \em Draw point $W_i$}
      Draw $c_i$ from the uniform distribution in the interval $[0, 1]$\;
      $r_i \gets $ solution of $c_i = \frac{1}{|I|-i+1} U_i \cdot U_F(r)$\;
      Draw $d_i$ from the uniform distribution in the interval $[0, 1]$\;
      $\alpha_i \gets $ solution of $d_i = \frac{1}{V_i \cdot V_G(r_i, 2\pi)} V_i \cdot V_G(r_i,\alpha)$\;
      $W_i \gets r_i e^{i\alpha_i}$\;
      \BlankLine
      \Comment{C. \em Compute new vector $e_i$}
      $u_i\leftarrow \phi_I(W_i)-\sum_{k=1}^{i-1}e_k.\phi_I(W_i)\ e_k$\;
      $e_{i}\leftarrow u_i/\|u_i\|_{\C^{|I|}}$\;
  }
  \caption{Simulation of a compact symmetric projection point process
           restricted to the disc $\mathcal B_R$}
    \label{algo:newsampling}
\end{algorithm}

\subsection{Compact Ginibre and approximation}

Using Theorem~\ref{thm_transport:distanceProjectionDPP} with a well-chosen approximation, we
will show that we can reduce in Algorithm~\ref{algo:newsampling} the complexity
of steps \textsc{A.} and \textsc{B.} from $O(|I|^2)$ to $O(|I|^{1.5})$
operations with high probability.

For a given constant $c>0$ and for an integer $n$, let $R_n$ be the ring between
the circles of radii $u_{n} = \min(R,\sqrt{n}+c)$ and $l_n = \max(0,
\min(\sqrt{n}, R)-c)$. Let
\begin{align*}
  \mu_{n} =
  \int_{R_{n}} |\phi_{n}^{R}(z)|^2 dz &=
  \frac{\gamma(n+1, u_{n}^2) - \gamma(n+1, l_{n}^2)}{\gamma(n+1, R^2)}
  \\
  f_n(|z|) &= \frac{1}{\sqrt{\pi \gamma{(n+1,R^2)}}}e^{-\frac{|z|^2}{2}}.
\end{align*}
and define the following approximated functions :
$$\widetilde f_{n}(|z|) = \begin{cases}
  f_{n}(|z|)/\sqrt{\mu_{n}} & \text{if } z \in R_{n}\\
  0 & \text{otherwise}
\end{cases}$$ and let
\begin{equation*}
  \widetilde \phi_n^R(z) = z^{n} \widetilde f_{n}(|z|).
\end{equation*}
We now show that
replacing $\phi_n^R$ by $\tilde\phi_n^R$ does not cost much in terms of
accuracy.

\begin{theorem}
  For any $I\subset \{1,\cdots,N_{R}\}$,
  \begin{equation*}
    \T_{c}(\mu_{\phi},\, \mu_{\tilde\phi})\le \sum_{j\in I}\log\left( \frac{1}{\mu_{j}} \right)\cdotp
  \end{equation*}
\end{theorem}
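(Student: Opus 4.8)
The plan is to apply Theorem~\ref{thm_transport:distanceProjectionDPP} with $\psi_j = \phi_j^R$ and $\phi_j = \widetilde\phi_j^R$ (taking $\sigma = \mathrm{Id}$), which reduces the problem to bounding, for each $j \in I$, the quantity $\T_e\bigl(|\phi_j^R|^2\dif m,\ |\widetilde\phi_j^R|^2\dif m\bigr)$ by $\log(1/\mu_j)$. Here both measures are probability measures on $\mathcal B_R$: the first because $\phi_j^R$ is a unit vector of $L^2(\mathcal B_R, \dif z)$, the second by the very normalization $\widetilde f_j = f_j/\sqrt{\mu_j}$ restricted to the ring $R_j$. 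So the heart of the matter is a one-variable (or rather, radially symmetric two-dimensional) optimal transport estimate: $|\widetilde\phi_j^R|^2\dif z$ is exactly the conditional law of $|\phi_j^R|^2\dif z$ given that the point lands in the ring $R_j$. Transporting a probability measure to its own restriction-and-renormalization to a subset is a classical situation, and the natural bound should come from a quantitative version of the fact that this conditioning event has probability $\mu_j$ close to $1$.

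First I would make the change of variables to polar coordinates: under $|\phi_j^R(z)|^2\dif z$, the modulus-squared $\rho = |z|^2$ has a Gamma-type density proportional to $\rho^j e^{-\rho}$ on $[0,R^2]$ (truncated Gamma$(j+1)$), the argument is independent uniform, and $\widetilde\phi_j^R$ has the same law conditioned on $\rho \in [\ell_j^2, u_j^2]$. Since the optimal transport for the quadratic cost on the plane between two radially symmetric laws with the same angular part and nested/compatible radial supports is realized by a radial monotone rearrangement in the variable $r = |z|$, I expect $\T_e$ to be controlled by the $W_2$ distance between the radial marginals, which reduces to a one-dimensional monotone-transport integral $\frac12\int (r - T(r))^2 \dif(\text{law of }|z|)$. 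The key inequality to establish is then that this one-dimensional quadratic transport cost between a density and its normalized restriction to a subinterval is bounded by $\log(1/\mu_j) = -\log \mu_j$, i.e.\ by minus the log-probability of the retained region.

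The cleanest route to the $\log(1/\mu_j)$ bound is probably \emph{not} to compute the transport map explicitly but to use the Talagrand / log-Sobolev-type transport inequality: the truncated-Gamma radial law satisfies (or can be compared to a Gaussian-like reference satisfying) a $T_2$ transport-entropy inequality $\T_e(\cdot,\nu) \le C\,\mathrm{Ent}(\nu\,|\,\cdot)$, and the relative entropy of the conditioned measure with respect to the original one is exactly $\mathrm{Ent} = \log(1/\mu_j)$, since $\frac{\dif\widetilde\mu}{\dif\mu} = \mu_j^{-1}\mathbf 1_{R_j}$ gives $\int \log(\mu_j^{-1}\mathbf 1_{R_j})\,\dif\widetilde\mu = -\log\mu_j$. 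If the relevant constant $C$ is $1$ for the Gaussian weight $e^{-|z|^2/2}$ in dimension $2$ — which it is, by the Gaussian $T_2$ inequality of Talagrand — then after the polar reduction the bound $\T_e \le -\log\mu_j$ follows directly, and summing over $j \in I$ gives the claim. The main obstacle is making the passage from the two-dimensional picture with Gaussian weight $e^{-|z|^2}$ (after squaring $\phi_j^R$ we get weight $e^{-|z|^2}$, i.e.\ variance $1/2$, times a polynomial factor $|z|^{2j}$) to a clean application of a transport-entropy inequality: the polynomial factor $|z|^{2j}$ destroys log-concavity of the density near the origin, so one either argues that the relevant region for large $j$ is an annulus where the density is effectively log-concave, or one invokes a transport inequality valid for the truncated Gamma law directly. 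I would check whether the authors instead exploit the explicit monotone radial transport map and a direct elementary estimate; if so, the plan above still works but with the $T_2$ step replaced by that explicit computation, which is where the real work sits.
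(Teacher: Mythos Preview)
Your proposal is correct and follows essentially the same route as the paper: reduce via Theorem~\ref{thm_transport:distanceProjectionDPP}, pass to the radial one-dimensional problem, and bound the resulting $\T_e$ by the relative entropy $\log(1/\mu_j)$ through a Talagrand $T_2$ inequality. The only point where you hesitate is exactly the one the paper dispatches in one line: after the polar reduction the relevant reference measure is $\rho_\infty(\dif r)\propto r^{2j+1}e^{-r^2}\car_{[0,R]}(r)\dif r$, and
\[
-\frac{d^2}{dr^2}\log\bigl(r^{2j+1}e^{-r^2}\bigr)=\frac{2j+1}{r^2}+2\ge 2,
\]
so the Bakry--Emery criterion gives Talagrand's inequality with the right constant directly. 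Your worry that ``the polynomial factor $|z|^{2j}$ destroys log-concavity'' applies only to the two-dimensional density (where $\log|z|$ is harmonic), not to the one-dimensional radial density, which is in fact \emph{more} strongly log-concave near the origin; once you commit to the radial reduction you already sketched, there is no obstacle and no need for an explicit monotone-map computation.
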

\begin{proof}
  According to Theorem~\ref{thm_transport:distanceProjectionDPP}, it is
  sufficient to evaluate
  \begin{equation*}
    \T_{e}\left(|\phi_{j}^{R}|^{2}\dif x,\, |\tilde\phi_{j}^{R}|^{2}\dif x\right)
  \end{equation*}
  for any $j\in I$. Denote the two measures involved in the previous OTP by
  \begin{equation*}
    \zeta_{j}(dx)=|\phi_{j}^{R}(x)|^{2}\dif x,\ \tilde\zeta_{j}(dx)=|\tilde\phi_{j}^{R}(x)|^{2}\dif x.
  \end{equation*}
  These are two radially symmetric measures on $\R^{2}$. We still denote by
  $\zeta $ and $\tilde\zeta$ the two measures they induce on the polar
  coordinates $(r,\theta)$. Consider
  \begin{equation*}
    \zeta_{j}(\dif r \, |\, \theta)=c_{j}\,r^{2j+1}e^{-r^{2}}\car_{[0,R]}(r) \text{ where }c_{j}=\frac{1}{\gamma(j+1,R^{2})},
  \end{equation*}
  the distribution of $r$ given $\theta $ under $\zeta$ and the same quantity
  for $\tilde \zeta$. If we have a coupling $\Sigma_{\theta}$ between these two
  measures, then
  \begin{equation*}
    (r,\theta)\longmapsto (\Sigma_{\theta}(r),\theta)
  \end{equation*}
  is a coupling between $\zeta_{j}$ and $\tilde\zeta_{j}$. It follows that
  \begin{equation*}
    \T_{e}(\zeta_{j},\, \tilde \zeta_{j})\le \T_{e}\left(c_{j} r^{2j+1} e^{-r^{2}}\car_{[0,R]}(r)\dif r,\, \tilde c_{j} r^{2j+1} e^{-r^{2}}\car_{R_{j}}(r)\dif r\right)
  \end{equation*}
  where
  \begin{equation*}
    \tilde c_{j}=\frac{1}{\mu_{j}}\cdotp
  \end{equation*}
  We have
  \begin{equation*}
    -\frac{d^{2}}{dr^{2}}\log (r^{2j+1} e^{-r^{2}})=\frac{2j+1}{r^{2}}+2\ge 2.
  \end{equation*}
  Hence the Bakry-Emery criterion \cite{Villani2003} entails that the measure
  \begin{equation*}
   \rho_{\infty}(dr)=c_{j} r^{2j+1}
  e^{-r^{2}}\car_{[0,R]}(r)\dif r
  \end{equation*}
   satisfies the Talagrand inequality: For any
  probability measure $\rho$
  \begin{equation*}
    \T_{e}(\rho,\, \rho_{\infty})\le H(\rho \,| \, \rho_{\infty})= \int \rho(r)\log\frac{\rho(r)}{\rho_{\infty}(r)}\dif r
  \end{equation*}
  Apply this identity to
  \begin{equation*}
    \dif \rho_{j}(r)=\tilde c_{j} r^{2j+1} e^{-r^{2}}\car_{R_{j}}(r)\dif r
  \end{equation*}
  yields
  \begin{equation*}
     \T_{e}(\rho_{j},\, \rho_{\infty})\le \log\left( \frac{1}{\rho_{\infty}(R_{j})} \right)=\log\left( \frac{1}{\mu_{j}} \right)\cdotp
   \end{equation*}
\end{proof}
Finally, using the same techniques as above,  we bound the sum of the $\log\left(\frac 1 {\mu_i}\right)$ in the following lemma.
\begin{lemma}
  There exists a constant $\kappa$ such that for $\sqrt{\log R} \leq c
  \leq R$:
  \begin{equation*}
    \sum_{n=0}^\infty\log\left( \frac{1}{\mu_{n}} \right) \leq \kappa R^2e^{-c^2}\cdotp
  \end{equation*}
\end{lemma}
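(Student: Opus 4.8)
The plan is to bound each $\log(1/\mu_n)$ separately, splitting the sum according to whether the ring $R_n$ actually touches the boundary circle of radius $R$ or the origin, and then to recognize that the quantities involved are exactly tail probabilities of Poisson-type distributions that can be controlled by the same reduced-incomplete-gamma estimates used in the proofs of Theorem~\ref{thm_simulation:troncatureeigenvalues} and Lemma~\ref{thm_simulation:lowerBound}. Recall that
\begin{equation*}
  \mu_n = \frac{\gamma(n+1,u_n^2)-\gamma(n+1,l_n^2)}{\gamma(n+1,R^2)},
\end{equation*}
so that $1-\mu_n$ is, up to the normalization $\gamma(n+1,R^2)\le n!$, the mass that $\phi_n^R$ puts outside the ring $R_n$. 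Since $-\log(1-t)\le t/(1-t)$ and $\gamma(n+1,R^2)/n!$ is bounded below away from $0$ for the relevant range of $n$ (indeed it is $\lambda_n^R$, which is close to $1$ for $n\ll R^2$), it suffices to show that $\sum_{n}(n!-\gamma(n+1,R^2)+\gamma(n+1,l_n^2)-[\,\gamma(n+1,R^2)-\gamma(n+1,u_n^2)\,])/n!$ — i.e. the total outside-ring mass, which is $\sum_n\big(1-\lambda_n^R\big)\;\text{or}\;(\text{something smaller})$ once restricted to the appropriate range — is $O(R^2e^{-c^2})$.

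More concretely I would proceed as follows. First, note $\mu_n=0$ only matters when $R_n$ is nonempty; for $n$ with $\sqrt n + c \le R$ and $\sqrt n - c\ge 0$, i.e. roughly $c^2\le n\le (R-c)^2$, the ring is the genuine annulus $[(\sqrt n-c)^2,(\sqrt n+c)^2]$ and
\begin{equation*}
  1-\mu_n = \frac{\gamma(n+1,l_n^2)+\Gamma(n+1,u_n^2)-\Gamma(n+1,R^2)}{\gamma(n+1,R^2)}
  \le \frac{\gamma(n+1,(\sqrt n-c)^2)+\Gamma(n+1,(\sqrt n+c)^2)}{\gamma(n+1,R^2)}.
\end{equation*}
Here $\gamma(n+1,(\sqrt n-c)^2)/n!$ and $\Gamma(n+1,(\sqrt n+c)^2)/n!$ are respectively the probabilities that a Poisson$((\sqrt n-c)^2)$ exceeds $n$ and a Poisson$((\sqrt n+c)^2)$ is at most $n$ — equivalently, by the Poisson-Gamma duality already exploited in the paper, lower/upper incomplete-gamma ratios that each decay like $e^{-c^2}$ (Chernoff/Stirling bound: $\Gamma(n+1,(\sqrt n+c)^2)/n!\le \exp(n-(\sqrt n+c)^2-n\log(n/(\sqrt n+c)^2))\le e^{-\Theta(c^2)}$, and symmetrically for the lower tail). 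Summing over the $O(R^2)$ admissible values of $n$ gives the $O(R^2e^{-c^2})$ bound for this bulk contribution, after dividing by $\gamma(n+1,R^2)/n!=\lambda_n^R$ which is bounded below by a constant on this range (using $n\le (R-c)^2$ and Lemma~\ref{thm_simulation:lowerBound}-type estimates, or simply $c\ge\sqrt{\log R}$ to absorb the $1/\lambda_n^R$ factor into the constant $\kappa$).

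Second, for the boundary regimes — $n$ near or above $(R-c)^2$, where $u_n$ gets truncated to $R$, and $n\le c^2$, where $l_n=0$ — the ring either reaches the hard wall at $R$ or starts at the origin, so one of the two tail terms vanishes and the remaining one is even smaller; moreover there are only $O(Rc)=O(R^2)$ such indices, and for the large-$n$ tail one can simply use $1-\mu_n\le 1$ together with $\log(1/\mu_n)\le\log(1/\lambda_n^R)+\log(\gamma(n+1,R^2)/(\gamma(n+1,u_n^2)-\gamma(n+1,l_n^2)))$ and bound $\sum_{n\ge (R-c)^2}$ directly by the geometric-type estimate from the proof of Theorem~\ref{thm_simulation:troncatureeigenvalues}, which is already $O(Re^{-c^2})$. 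The main obstacle I anticipate is bookkeeping the constant uniformly: one must check that $\lambda_n^R=\gamma(n+1,R^2)/n!$ stays bounded below (so that $-\log\mu_n\approx 1-\mu_n$ and the division by $\gamma(n+1,R^2)$ costs only a constant) across precisely the range of $n$ that contributes, and this is exactly where the hypothesis $\sqrt{\log R}\le c$ is needed — it forces $e^{-c^2}\le 1/R$, making the problematic near-boundary indices (where $\lambda_n^R$ could be small) contribute a total that is still $O(R^2 e^{-c^2})$ because their count times $e^{-c^2}$ times a poly-$R$ correction remains controlled. Assembling the three pieces (bulk, origin side, wall side) and collecting constants into a single $\kappa$ completes the proof.
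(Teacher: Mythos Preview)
Your overall strategy matches the paper's: split the sum according to the geometry of the ring $R_n$ and control each piece by incomplete-gamma tail bounds of the same flavour as in Theorem~\ref{thm_simulation:troncatureeigenvalues} and Lemma~\ref{thm_simulation:lowerBound}. Your bulk treatment ($c^2\le n\le (R-c)^2$) is essentially the paper's $S_1$ argument. The gap is in the wall side, specifically the infinite tail $n\ge R^2$, where $u_n=R$ and $l_n=R-c$ so that $1-\mu_n=\gamma(n+1,(R-c)^2)/\gamma(n+1,R^2)$. First, your displayed inequality
\[
\log(1/\mu_n)\le\log(1/\lambda_n^R)+\log\bigl(\gamma(n+1,R^2)/(\gamma(n+1,u_n^2)-\gamma(n+1,l_n^2))\bigr)
\]
is vacuous: the second term on the right is $\log(1/\mu_n)$ itself, so it reads $\log(1/\mu_n)\le\log(1/\lambda_n^R)+\log(1/\mu_n)$. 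Second, Theorem~\ref{thm_simulation:troncatureeigenvalues} bounds $\sum_{n\ge (R+c)^2}\lambda_n^R$, which gives no control on $\log(1/\mu_n)$: smallness of $\lambda_n^R$ says nothing about $\mu_n$ being close to~$1$. Third, the claim that ``there are only $O(Rc)$ such indices'' is false for this piece, which runs to infinity.

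What is actually needed---and what the paper supplies as its $S_3$ estimate after further splitting the wall side at $n=R^2$---is a geometric decay of the ratio $\gamma(n,(R-c)^2)/\gamma(n,R^2)$ in $n-R^2$: the paper shows directly that this ratio is at most $\frac{R}{c}(1-c/R)^{2(n-R^2)}e^{-2c^2}$, which sums over $n\ge R^2$ to $O\bigl((R^2/c^2)e^{-2c^2}\bigr)$. This decay is not a corollary of the earlier lemmas and must be proved separately; without it your tail sum is not even shown to converge.
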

\begin{proof}
We split the sum in three parts:
\begin{align*}
  S_1 &= \sum_{n=0}^{(R-c)^2-1}\log\left(\frac{1}{\mu_{n}}\right)\\
  S_2 &= \sum_{n=(R-c)^2}^{R^2-1}\log\left(\frac{1}{\mu_{n}}\right)\\
  S_3 &= \sum_{n=R^2}^\infty\log\left(\frac{1}{\mu_{n}}\right)\\
\end{align*}
We will first prove that the terms in $S_1$ and $S_2$ are $O(e^{-c^2})$
and the terms $\log(\frac 1 {\mu_{R^2+k}})$ in $S_3$ are
$O(Re^{-c^2}(1-\frac 1 R)^k)$.

For $c \geq 1$ and $n \leq (R-c)^2$, we show that ${\mu_n}^{-1}$ is roughly equal to
$\dfrac {\gamma(n+1, R^2)}{\Gamma(n+1)}$:
\begin{multline*}
  \gamma(n+1,  u_n^2) - \gamma(n+1, l_n^2)\\
  \begin{aligned}
&  = \Gamma(n+1) - \Gamma(n+1, u_n^2) - \gamma(n+1, l_n^2)\\
  &\geq \Gamma(n+1) - \frac{u_n^{2(n+1)}}{u_n^2 - n - 1}e^{-u_n^2}
            - \frac{l_n^{2(n+1)}}{n + 1 - l_n^2}e^{-l_n^2}\\
  &\geq \Gamma(n+1) - \frac{u_n^2 e^{-c^2}}{(u_n^2 - n - 1)\sqrt{2\pi n}}\Gamma(n+1)
     - \frac{l_n^2 e^{-c^2}}{(n + 1 - l_n^2)\sqrt{2\pi n}}\Gamma(n+1) \\
  &\geq \Gamma(n+1)(1 - e^{-c^2})
  \end{aligned}
\end{multline*}
This implies that
\begin{equation*}
 \frac 1 {\mu_n} \leq \frac{\gamma(n+1, R^2)}{\Gamma(n+1)(1 - e^{-c^2})}
\end{equation*}
and
\begin{equation*}
 \log(\frac 1 {\mu_n}) \leq \frac 1 {1 - e^{-c^2}} (\frac{\Gamma(n+1, R^2)}{\Gamma(n+1)} + e^{-c^2})\cdotp
\end{equation*}
Thus,  for $c \geq 1$, we have:
\begin{equation*}
  S_1 \leq 2Re^{-c^2} + 2R^2e^{-c^2}.
\end{equation*}
For $S_2$ we have $u_n = R$ and $l_n = \sqrt n - c$ such that
\begin{equation*}
 \frac 1
{\mu_n} = \frac 1 {1 - \dfrac {\gamma(n+1, (\sqrt n -c)^2)}{\gamma(n+1,
R^2)}}\cdotp
\end{equation*}
Moreover, for $n + 1 \leq R^2$, we know that
\begin{align*}
 \gamma(n+1, R^2)
  &\geq \frac {\Gamma(n+1)} 2 \\
  \intertext{ and }
  \gamma(n+1, (\sqrt n -c)^2) &\leq e^{-c^2}
\Gamma(n+1).
\end{align*}
Combine these identities with the well known fact
\begin{equation*}
  \sum_{i} \log \left(\frac 1 {1-\epsilon_i}\right)
\leq \frac{\sum_{i} \epsilon_i}{1-\max \epsilon_i}
\end{equation*}
to obtain
\begin{equation*}
  S_2 \leq \frac{2cR e^{-c^2}}{1 - e^{-c^2}}\cdotp
\end{equation*}
Finally for $S_3$, we have $u_n = R$ and $l_n = R-c$, so that we have
\begin{equation*}
  \frac 1 {\mu_n}
= \frac 1 {1 - \frac {\gamma(n+1, (R-c)^2)}{\gamma(n+1, R^2)}}\cdotp
\end{equation*}
 Then remark that :
\begin{align*}
  \frac {\gamma(n, (R-c)^2)}{\gamma(n, R^2)}
  &\leq \frac{(R-c)^{2n}e^{-(R-c)^2}/(n-(R-c)^2)}{R^{2n}e^{-R^2}/n} \\
  &\leq \frac{(1-\frac c R)^{2n}e^{R^2-(R-c)^2}}{1-\frac{(R-c)^2} n} \\
  &\leq \frac{(1-\frac c R)^{2(n-R^2)}e^{- 2R^2\frac c R - R^2\frac{c^2}{R^2} + 2cR - c^2}}{1-\frac{(R-c)^2} n} \\
  &\leq \frac{(1-\frac c R)^{2(n-R^2)}e^{-2c^2}}{1-\frac{(R-c)^2}{R^2}} \\
  &\leq \frac R c (1-\frac c R)^{2(n-R^2)}e^{-2c^2} \\
\end{align*}
Thus, summing from $R^2$ to $\infty$ we get:
\begin{equation*}
  S_3 \leq \frac{\dfrac {R^2} {c^2} e^{-2c^2}}{1-\dfrac {R} {c}
  e^{-2c^2}}\cdotp
\end{equation*}
The proof is thus complete.

\end{proof}

\subsection{Experimental results}

An implementation in Python of this algorithm publicly available
\cite{Mzenodo20} allowed us to sample $10\,000$ points in $2\,128$ seconds on a
8 core 3Ghz CPU. The same approach can be used for the DPP with the so-called
Bergmann kernel which represents the zeros of some Gaussian analytic
functions~\cite{HoughZerosGaussiananalytic}.


\end{document}